\documentclass[11pt,reqno]{amsart}
\usepackage{style}
\usepackage{microtype}
\renewcommand{\P}{\mathbb{P}}

\numberwithin{equation}{section}
\makeatletter
\renewcommand{\tagform@}[1]{\maketag@@@{\normalcolor(\ignorespaces#1\unskip\@@italiccorr)}}
\makeatother

\usepackage[
  paper=letterpaper,
  textwidth=450pt,
  includehead,            
  headheight=8pt,
  headsep=14pt,
  textheight=606pt,       
  centering               
]{geometry}

\renewcommand\d{\mathrm{d}}

\title[On the Cardinality of the Smith Set]{On the Cardinality of the Smith Set\\ Under Impartial Culture with Many Alternatives}
\author{Zachary Berchenko}
\address{Saint Paul Academy and Summit School}
\email{zackberchenko@gmail.com}

\author{Kaizhao Liu}
\address{Massachusetts Institute of Technology}
\email{mrzt@mit.edu}

\date{\today}

\begin{document}

\begin{abstract}
For some integer $\ell\geq 2$, consider $m=2\ell-1$ voters, each of which has a preference ranking over $n$ alternatives.
The Smith set is the smallest nonempty set of alternatives each of which defeats every alternative outside the set in a pairwise majority comparison.
Under the standard benchmark \textit{impartial culture} where each voter uniformly and independently chooses a random preference ranking over all alternatives, 
we study the asymptotic probability that the Smith set has cardinality $s$ for fixed $\ell$ and $n\to \infty$.

First, we prove that the probability that the Smith set has constant cardinality $s$ goes to zero at a rate of $\Theta_{\ell,s} (n^{-s(\ell-1)/\ell})$. 
Next, if $\min\{s,n-s\}\to\infty$, we show that this probability decays superpolynomially.
Further, when $s$ and $n-s$ are both $\Theta(n)$, we prove that the probability decays exponentially and determine the exact exponential base. 
Finally, we prove that the probability that the Smith set contains all alternatives approaches one at a rate of $\Theta_\ell(n^{-(\ell-1)/\ell})$.
These theorems resolve the conjectures in \cite{liu2025statistical} under impartial culture.
\end{abstract}

\maketitle

\section{Introduction.}
The aggregation of individual preferences into a collective decision is a foundational problem in social choice theory, with applications spanning economics, political science, and computer science~\cite{shoham2008multiagent,easley2010networks,arrow2012social,sen2017collective}. 
In scenarios where a population votes on a slate of alternatives, the most natural objective is to identify a unique, unambiguously ``best'' option which defeats every other alternative in pairwise majority comparisons, known as the \textit{Condorcet winner}~\cite{gehrlein2006condorcet,xiao2025theoretical}. 
However, a unique Condorcet winner does not always exist.
We identify the alternatives with $[n]=\{1,\ldots,n\}$, so a ranking is a permutation of $[n]$. 
In the classic \textit{Condorcet paradox}~\cite{de2014essai}, three voters have cyclic preferences over the alternatives $1,2,3$: $1\succ2\succ3$, $2\succ3\succ1$, and $3\succ1\succ2$. Then $1$ defeats $2$, $2$ defeats $3$, and $3$ defeats $1$ by pairwise majority, leaving no alternative undefeated.

When majority rule fails to produce a single Condorcet winner, the standard social choice approach relaxes the requirement of a single champion to a \textit{set} of top-tier candidates that collectively dominate the remaining field.
This is formalized through the notion of a \textit{dominating set} in the \textit{majority digraph}.
\begin{definition}
The \textit{majority digraph} is the directed graph on $[n]$ with an arc from $i$ to $k$ if and only if a majority of voters prefer alternative $i$ over alternative $k$.
A nonempty set \(D\subseteq[n]\) is said to be \textit{dominating} if, for every \(i\in D\) and \(k\in[n]\setminus D\), the majority digraph contains the arc \(i\to k\).
\end{definition}

Dominating sets are generally not unique. For example, the entire set of alternatives is always a dominating set.
However, dominating sets are nested, forming a strict chain under set inclusion. 
Rather than selecting an arbitrary or overly conservative dominating set, social choice theory focuses on the minimal element of this chain, universally known as the \textit{Smith set}~\cite{smith1973aggregation,shi2025fundamental}.

\begin{definition}
    The \textit{Smith set} $S$ is the smallest dominating set of alternatives.
\end{definition}
The Smith set is unique and always exists when the number of alternatives is finite~\cite{liu2025statistical}. 
If a Condorcet winner exists, then the Smith set contains only that alternative.
For an odd number of voters, the majority digraph is a tournament\footnote{A tournament is a simple digraph where every pair of distinct vertices is connected by exactly one arc.}, and the Smith set is its unique strongly connected component that dominates all the others~\cite{Brandt_Conitzer_Endriss_Lang_Procaccia_2016}. 
A tournament on two vertices is not strongly connected, so the Smith set has cardinality either one or at least three~\cite{liu2025statistical}. 
By Camion's theorem~\cite{camion1959chemins}, every strongly connected tournament on at least three vertices has a Hamiltonian cycle. 
Thus, for $n\ge3$, the Smith set equals $[n]$ if and only if the majority tournament has a Hamiltonian cycle.
We say that a set of alternatives forms a \textit{majority cycle} if their induced subgraph is Hamiltonian. 
Thus, a set of at least three alternatives is the Smith set if and only if it forms a majority cycle and is dominating.

In general, the size of the Smith set serves as a structural measure of consensus:
a small set identifies a clear top tier, while a large set indicates widespread majority cycles. 
The primary goal of this paper is to study the typical cardinality of the Smith set. 

\subsection{A Probabilistic Voting Model}
To investigate this mathematically, we employ the \textit{impartial culture} model~\cite{garman1968paradox,tsetlin2003impartial,sauermann2022probability}, a foundational benchmark in social choice theory, where each voter independently chooses a preference ranking uniformly at random from all $n!$ possible orderings of $n$ alternatives. Under this random voting model, the macroscopic behavior of preference aggregation depends on the magnitude of the population size $m$ and the number of alternatives $n$.

Traditionally, social choice theory studies the behavior of preference aggregation in a regime where a large, growing population of voters ($m \to \infty$) evaluates a small, fixed number of alternatives ($n$). This is particularly the case in voting and elections, where a large population votes for few candidates (alternatives).
Recent applications, however, necessitate the study of a different regime~\cite{liu2025statistical}. For example, in the alignment of large language models via Reinforcement Learning from Human Feedback~\cite{Ouyang2022}, a fixed panel of human expert labelers (voters) provides pairwise comparisons over a massive pool of potential model outputs (alternatives). This structure motivates the study of the regime where the number of voters $m$ is fixed, while the number of alternatives $n$ tends to infinity~\cite{liu2025statistical}.
This leads to the following central question in the paper:
\begin{center}
    \emph{Let $S_{m, n}$ be the Smith set under an impartial culture with $m$ voters and $n \to \infty$ alternatives. For any integer $3 \le s \le n$, what is the asymptotic probability that $|S_{m, n}| = s$?}
\end{center}

Although the impartial culture model has been a standard benchmark in social choice theory for over half a century, the asymptotic behavior of the Smith set as $n \to \infty$ with fixed $m$ has remained largely untouched until recently. 
Most of prior works focus on the classical regime, where the analysis relies heavily on the Central Limit Theorem: as the number of voters approaches infinity, the pairwise preference margins between candidates converge to multivariate normal distributions~\cite{bellRandomVotingGraph1981,mosselGaussianBoundsNoise2010}.
In particular,  \cite{bellRandomVotingGraph1981} proved in 1981 that, under an impartial culture with many voters, the probability that the Smith set is the entire set of alternatives, i.e. a Hamiltonian cycle exists in the majority digraph, converges to $1$ as the number of alternatives increases.

For the regime considered in this paper, the Gaussian approximations of the traditional literature no longer apply, and analyzing the size of the Smith set becomes a difficult problem of probabilistic combinatorics. 
Prior works focus exclusively on the case where $s=1$, i.e. there exists a Condorcet winner.
A long-standing conjecture by Garman and Kamien \cite{garman1968paradox} posited that the probability of a unique Condorcet winner existing vanishes as $n \to \infty$ based on numerical evidence. 
Later, \cite{may1971some} proved that this probability converges to zero at a rate of $O(n^{-\frac{1}{2}})$ for any fixed $m\geq 3$, but it is unclear whether this bound is tight. 
This was recently resolved in \cite{sauermann2022probability}, where it is proved that for $m = 2\ell-1$ voters, the probability of a unique Condorcet winner decays at a tight polynomial rate of $\Theta_\ell(n^{-\frac{\ell-1}{\ell}})$.
Beyond this singleton case, the asymptotic distribution of the Smith-set cardinality has remained unknown.

\subsection{Main Results}
For $\mathbf w=(w_1,\dots,w_m)$, define the elementary symmetric polynomial
\begin{equation}\label{eq:sigma}
\sigma_\ell(\mathbf w)
=\sum_{\substack{M\subseteq[m]\\ |M|=\ell}}\prod_{j\in M}w_j,
\end{equation}
and the constants
\begin{equation}\label{eq:constants}
\begin{aligned}
c_\ell
&=\int_{[0,\infty)^m}e^{-\sigma_\ell(\mathbf w)}\,\d\mathbf w,\\
\lambda_\ell
&=e^{(3\ell-1)/\ell}\binom{m}{\ell}^{m/\ell}c_\ell.
\end{aligned}
\end{equation}
The finiteness of $c_\ell$ is established in \cite[Lemma~3]{sauermann2022probability}.
Using the constants in~\eqref{eq:constants}, we upper bound the probability that the Smith set has any cardinality $s<n$.
\begin{theorem}\label{thm:upper bound s=o(n)}
For all integers $1\le s\le n/2$,
\[
\max\bigl\{\P(|S_{m,n}|=s),\,\P(|S_{m,n}|=n-s)\bigr\}
\le \lambda_\ell^s
     \left(\frac{s}{n}\right)^{s(\ell-1)/\ell}.
\]
\end{theorem}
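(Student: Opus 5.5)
The approach is a union bound over candidate dominating sets, reusing the integral representation from \cite{sauermann2022probability}.

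\textbf{Reduction.} If $|S_{m,n}|=s$ or $|S_{m,n}|=n-s$ with $s\le n/2$, then there is a dominating set of cardinality $s$, or a set $D$ of cardinality $n-s$ dominating its complement. In the second case, write $D^c=[n]\setminus D$, so $|D^c|=s$ and every element of $D^c$ is beaten by every element of $D$. By reversing all rankings, which preserves impartial culture, the event "a fixed set $D^c$ of size $s$ is dominated by everything else" has the same probability as "$D^c$ is dominating". So both probabilities are at most
\[
\binom{n}{s}\,\P\bigl([s]\text{ is dominating}\bigr).
\]
It therefore suffices to show $\P([s]\text{ dominating})\le \binom{n}{s}^{-1}\lambda_\ell^s (s/n)^{s(\ell-1)/\ell}$. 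Since $\binom{n}{s}\le (en/s)^s$, it is enough to prove
\[
\P([s]\text{ dominating})\le \Bigl(e^{(2\ell-1)/\ell}\tbinom{m}{\ell}^{m/\ell}c_\ell\Bigr)^s \Bigl(\frac{s}{n}\Bigr)^{s}\Bigl(\frac{s}{n}\Bigr)^{s(\ell-1)/\ell}\cdot\text{(slack)}.
\]

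\textbf{Key computation.} Model each voter $j$'s ranking via i.i.d.\ uniform positions $x_{j,a}\in[0,1]$ for each alternative $a$, with smaller meaning better. Condition on the positions $x_{j,i}$ of the $s$ alternatives $i\in[s]$. Each outside alternative $k$ independently must lose to every $i\in[s]$. A sufficient necessary condition is that $k$ loses to the "best-positioned" coordinates. Concretely, $k$ loses to $i$ only if, for some $\ell$-subset $M$ of voters, $x_{j,k}>x_{j,i}$ for all $j\in M$. Put $y_j=\min_{i\in[s]}x_{j,i}$. The probability that a single $k$ is beaten by a given $i$ is at most $1-\P(\exists M: x_{j,k}<x_{j,i}\ \forall j\in M)$. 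Following \cite{sauermann2022probability}, one bounds the probability that $k$ is beaten by all of $[s]$ by $\exp(-c\,\sigma_\ell(\cdot))$-type quantities.

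The cleanest route is to pick, for each $i$, its coordinate vector $\mathbf x_i$ and note that $k$ fails to beat $i$ requires $k$ to avoid every box $\prod_{j\in M}[0,x_{j,i}]$. So
\[
\P(k\text{ beaten by all})\le \prod_i\Bigl(1-\sigma_\ell(\mathbf x_i)\Bigr)_+\cdot(\text{correction}),
\]
where the boxes are heuristically near-disjoint. Raising this to the power $n-s$ gives $\exp\bigl(-(n-s)\sum_i\sigma_\ell(\mathbf x_i)\bigr)$ up to correction. Integrating over $\mathbf x_i\in[0,1]^m$ and substituting $\mathbf w=((n-s)\binom{m}{\ell}\ldots)^{1/\ell}\mathbf x$ gives a factor $c_\ell\,(n-s)^{-m/\ell}$ per alternative. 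The $\binom{m}{\ell}^{m/\ell}$ and $e$-power constants absorb the inclusion--exclusion loss and the passage from $n-s$ to $n\ge 2s$.

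\textbf{Main obstacle.} Plain independence gives exponent $m/\ell$, not the required $1+(\ell-1)/\ell=(2\ell-1)/\ell=m/\ell$. These coincide, so the powers of $n$ match. The real difficulty is that the boxes for different $i$ overlap. The event "$k$ loses to all $i$" is not bounded by a product over $i$ via union bounds directly, because losing to each $i$ is positively correlated across $i$. I would handle this with Harris/FKG: conditioned on all $\mathbf x_i$, the events "$k$ loses to $i$" are decreasing in $\mathbf x_k$ (smaller $x_{j,k}$ means $k$ is better). FKG then gives $\P(\bigcap_i)\ge\prod$, which is the wrong direction.

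So I would first try a different route. Order $[s]$ so that the events for distinct $i$ are dominated using a single "leader" per voter-subset. Alternatively, bound $1-\P(k\text{ loses to all }i)\ge \max_i\P(k\text{ beats } i)$ and then use $\max\ge$ geometric mean, giving
\[
\P\le\prod_i\bigl(1-\sigma_\ell(\mathbf x_i)\,\text{const}\bigr)^{(n-s)/s}.
\]
This loses a factor $s$ inside the exponent, which is exactly what produces the $(s/n)$ scaling in the bound. The substitution then yields $\bigl(c_\ell\,\text{const}\,(s/n)^{m/\ell}\bigr)^s$. I expect tuning the constants so that they land at precisely $\lambda_\ell$ to be the delicate part.
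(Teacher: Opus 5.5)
Your final route is correct and is essentially the paper's proof. You bound the chance that an outside alternative loses to all of $[s]$ by $\min_i(1-q_i)\le\prod_i(1-q_i)^{1/s}$ with $q_i\ge\sigma_\ell(\mathbf x_i)/\binom{m}{\ell}$, which leads to exactly the paper's integrand $\exp\bigl(-\tfrac{n-s}{s\binom{m}{\ell}}P_s(\mathbf x)\bigr)$; the paper reaches it via $g_s\ge P_s/(s\binom{m}{\ell})$, and it uses the same reversal symmetry for the $n-s$ case. Drop the ``near-disjoint boxes'' product heuristic, and note that the constants are not delicate: $n-s\ge n/2$ and $\binom ns\le(en/s)^s$ give $e\,2^{m/\ell}\binom{m}{\ell}^{m/\ell}c_\ell\le\lambda_\ell$ directly, with no slack needed.
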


With Theorem~\ref{thm:upper bound s=o(n)}, we resolve the following conjecture from \cite{liu2025statistical} on the sharp decay rate of the probability that the Smith set has any fixed cardinality $s$.
\begin{theorem}\label{thm:constant s}
    For any fixed cardinality $s \ge 3$,
\[
\P(|S_{m, n}|=s)=\Theta_{\ell,s} \left(n^{-\frac{s(\ell-1)}{\ell}}\right).
\]
\end{theorem}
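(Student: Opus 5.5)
The upper bound is immediate from Theorem~\ref{thm:upper bound s=o(n)}. For fixed $s$ and $n\ge 2s$ it gives
\[
\P(|S_{m,n}|=s)\le \lambda_\ell^s s^{s(\ell-1)/\ell}\, n^{-s(\ell-1)/\ell}=O_{\ell,s}\bigl(n^{-s(\ell-1)/\ell}\bigr).
\]
All the work is therefore in the matching lower bound. I will prove it by a second-moment argument over candidate Smith sets, in the style of the Condorcet-winner lower bound of \cite{sauermann2022probability}.

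For a fixed $s$-subset $A\subseteq[n]$, let $E_A$ be the event that $A$ is dominating and that $A$ forms a majority cycle. By the characterization in the introduction, for $s\ge 3$ we have $\{|S_{m,n}|=s\}=\bigcup_A E_A$. Dominating sets form a chain, so at most one $s$-set can be dominating, and the union is disjoint. Hence
\[
\P(|S_{m,n}|=s)=\binom{n}{s}\P(E_{[s]}).
\]
This removes any need for a second moment: it suffices to show $\P(E_{[s]})=\Omega_{\ell,s}\bigl(n^{-s}\, n^{-s(\ell-1)/\ell}\bigr)$.

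To estimate $\P(E_{[s]})$, encode each voter $j$ by i.i.d.\ uniform positions $U_{j,i}\in[0,1]$ for the alternatives $i$, with smaller values preferred. Condition on the positions $u_{j,i}$ of the alternatives $i\le s$, and write $x_j=\min_{i\le s}u_{j,i}$ and $y_j=\max_{i\le s}u_{j,i}$. An outside alternative $k$ is beaten by every $i\le s$ iff, for each $i$, at least $\ell$ voters rank $i$ above $k$. A sufficient condition is that $k$ lies below $y_j$ for at least $\ell$ voters. Conversely, if $k$ lies above $x_j$ for at least $\ell$ voters, then $k$ is certainly not beaten by all of $A$. Since the outside alternatives are conditionally independent, the conditional probability that $A$ dominates is at least $(1-p(\mathbf y))^{n-s}$, where $p(\mathbf y)$ is the probability that at least $\ell$ coordinates $j$ satisfy $U_j<y_j$. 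We have $p(\mathbf y)\le \sum_{|M|=\ell}\prod_{j\in M}y_j=\sigma_\ell(\mathbf y)$.

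Now restrict to the event that each $y_j\le w_j/n^{1/\ell}$ with $\sigma_\ell(\mathbf w)\le 1$. There $(1-p)^{n-s}\ge e^{-2}$ for large $n$. We also need the induced tournament on $A$ to be Hamiltonian, which is a condition only on the relative orders of the alternatives within each voter. Pick an explicit cyclic profile on $A$, for example the rotations $1\succ2\succ\cdots\succ s$, $2\succ\cdots\succ s\succ1$, \dots, padded appropriately to $m$ voters, which produces a Hamiltonian majority cycle. Require every voter's internal order of $A$ to match this profile; that has probability $(s!)^{-m}>0$, and it is independent of the set of values. The volume of the event that all $s$ positions of voter $j$ lie in $[0,w_jn^{-1/\ell}]$ is $(w_jn^{-1/\ell})^s$. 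Integrating over a fixed box of $\mathbf w$ with $\sigma_\ell(\mathbf w)\le 1$ gives
\[
\P(E_{[s]})\ \ge\ c\, n^{-ms/\ell}
\]
for some $c=c(\ell,s)>0$.

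Finally, $ms/\ell=(2\ell-1)s/\ell=s+s(\ell-1)/\ell$. Multiplying by $\binom{n}{s}\asymp n^s$ therefore gives $\Omega_{\ell,s}\bigl(n^{-s(\ell-1)/\ell}\bigr)$, as required.

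The step needing the most care is the construction of a cyclic sub-profile realizable by $m=2\ell-1$ voters that yields a Hamiltonian majority tournament on $s$ vertices, for every $s\ge 3$ and every $\ell\ge 2$. The construction is as follows. Use three voters with the cyclic shifts of $1\succ\cdots\succ s$ by $0$, $\lceil s/3\rceil$ and $\lceil 2s/3\rceil$. Split the remaining $2\ell-4$ voters into pairs, each pair holding a ranking and its reverse, so that each pair cancels on every comparison. The majority tournament on $A$ is then exactly the one produced by the three voters. Each consecutive pair $i\to i+1$ wins $2$--$1$ there, apart from the wraparound arcs, which give a cycle; to be safe, take the classical Condorcet three-cycle on blocks, which yields a strongly connected tournament and hence, by Camion's theorem, a Hamiltonian one.
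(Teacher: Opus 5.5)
Your proof is correct and follows essentially the paper's argument: you fix a Hamiltonian sub-profile on $[s]$ (the paper uses the same three-block Condorcet construction padded with reversed pairs), confine all inside scores to a box of side $\asymp n^{-1/\ell}$ so that a union bound keeps the conditional domination probability above a constant, and multiply by $(s!)^{-m}$ and $\binom ns\asymp n^s$. The differences are cosmetic: you bound each outsider's failure probability by $\sigma_\ell(\mathbf y)$ with $y_j=\max_i x_{ij}$ and restrict to a fixed box directly, whereas the paper uses $P_s(\mathbf x)=\sum_i\sigma_\ell(\mathbf x_i)$ followed by a change of variables.
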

Theorem~\ref{thm:upper bound s=o(n)} also gives a superpolynomial upper bound whenever $t=\min\{s,n-s\}\to\infty$ and $t=o(n)$.
Indeed, the logarithm of the right-hand side in Theorem~\ref{thm:upper bound s=o(n)} is
\[
t\ln\lambda_\ell-\frac{\ell-1}{\ell}t\ln\frac nt
=-\left(\frac{\ell-1}{\ell}+o(1)\right)t\ln\frac nt.
\]
Since $t\to\infty$ and $t=o(n)$ imply $t\ln(n/t)/\ln n\to\infty$, this logarithm is $-\omega(\ln n)$.

To treat the remaining case $t=\Theta(n)$, define the lower binomial tail, for $p\in(0,1)$ and integer $0\le k\le m$, by
\begin{equation}\label{eq:binomial-tail}
F_{m,p}(k)=\sum_{i=0}^k\binom{m}{i}p^i(1-p)^{m-i}
\end{equation}
and, for $\alpha\in[0,1]$, define
\begin{equation}\label{eq:zeta}
\zeta_\ell(\alpha)
=\sup_{\substack{p\in(0,1)\\ r\in[\ell-1]_0}}
  F_{m,p}(r)^\alpha\,
  F_{m,1-p}(\ell-1-r)^{1-\alpha}.
\end{equation}
Let $\mu_d$ denote $d$-dimensional Lebesgue measure. 
For $0<\alpha<1$, the function $\zeta_\ell(\alpha)$ equals the supremum of $\mu_m(U)^\alpha\mu_m(V)^{1-\alpha}$ over measurable sets $U,V\subseteq[0,1]^m$ such that, for every $\mathbf u\in U$ and $\mathbf v\in V$, at least $\ell$ coordinates $j\in[m]$ satisfy $u_j<v_j$; see~\cite{HARPER1999285}.
We have the following upper bound.
\begin{theorem}\label{thm:upper bound s=theta(n)}
There are constants $C_\ell,N_\ell>0$ such that, for every $n\ge N_\ell$ and integer $3\le s\le n-1$,
\[
\P(|S_{m, n}|=s) \le \binom{n}{s} \left(\zeta_{\ell} \left( \frac sn \right)+C_\ell n^{-1/(m+1)}\right)^n.
\]
\end{theorem}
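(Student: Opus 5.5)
We start from a union bound over the candidate Smith set. If $|S_{m,n}|=s$, then some set $A\subseteq[n]$ with $|A|=s$ is dominating. We drop the requirement that $A$ forms a majority cycle, which only costs an upper bound. By exchangeability of impartial culture, this gives
\[
\P(|S_{m,n}|=s)\le \binom ns\,\P(A=[s]\text{ is dominating}).
\]
It therefore suffices to show that $\P([s]\text{ dominating})\le(\zeta_\ell(\alpha)+C_\ell n^{-1/(m+1)})^n$, where $\alpha=s/n$. Note that $n-s\ge1$ throughout.

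\textbf{Continuous embedding.} We realize impartial culture by assigning each alternative $i$ an i.i.d.\ uniform vector $\mathbf x_i\in[0,1]^m$, where voter $j$ ranks alternatives by increasing $x_{i,j}$. Then $i\to k$ holds iff at least $\ell$ coordinates satisfy $x_{i,j}<x_{k,j}$. Let $U=\{\mathbf x_i:i\le s\}$ and $V=\{\mathbf x_k:k>s\}$. The event that $[s]$ is dominating says that every pair $(\mathbf u,\mathbf v)\in U\times V$ has at least $\ell$ coordinates with $u_j<v_j$. The aim is to convert this finite configuration into a pair of measurable sets $(\tilde U,\tilde V)$ satisfying the same relation, so that the Harper-type characterization quoted before the theorem applies: $\mu(\tilde U)^\alpha\mu(\tilde V)^{1-\alpha}\le\zeta_\ell(\alpha)$.

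\textbf{Discretization.} Partition $[0,1]^m$ into a grid of $K^m$ cubes of side $1/K$. Condition on the set of cubes occupied by $U$-points and by $V$-points. Call a pair of cells \emph{compatible} if every point pair drawn from them has at least $\ell$ coordinates with strict inequality in the required direction. On the event, each occupied pair must be compatible up to boundary effects, namely coordinates lying in the same grid slab. To handle these, we take $\tilde U$ and $\tilde V$ to be the unions of occupied cells, each shrunk or shifted by one cell per coordinate, so that the compatibility is exact. This changes the measures by $O(m/K)$.

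\textbf{Counting.} The number of possible pairs of cell collections is at most $4^{K^m}$. For a fixed admissible pair of collections with measures $a$ and $b$, the probability that all $U$-points land in the first collection and all $V$-points in the second is $a^s b^{n-s}=(a^\alpha b^{1-\alpha})^n$. Here $a$ and $b$ are within $O(1/K)$ of sets $\tilde U,\tilde V$ obeying the constraint, so this probability is at most $(\zeta_\ell(\alpha)+O(1/K))^n$. Taking the union bound,
\[
\P([s]\text{ dominating})\le 4^{K^m}\bigl(\zeta_\ell(\alpha)+O(1/K)\bigr)^n .
\]
Choose $K\approx n^{1/(m+1)}$. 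Then $4^{K^m}=e^{O(n^{m/(m+1)})}$, which we absorb as
\[
4^{K^m}\le\Bigl(1+O\bigl(n^{-1/(m+1)}\bigr)\Bigr)^n
\le\Bigl(\frac{\zeta_\ell(\alpha)+C n^{-1/(m+1)}}{\zeta_\ell(\alpha)+O(1/K)}\Bigr)^n .
\]
This absorption uses a lower bound on $\zeta_\ell$ that is uniform in $\alpha$: taking $p=1/2$ and $r=\lfloor(\ell-1)/2\rfloor$ gives a positive constant. The result is the claimed bound for $n\ge N_\ell$.

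\textbf{Main obstacle.} The delicate step is the boundary issue in the discretization. Ties within a grid cell in some coordinate mean that the cells occupied by $U$ and $V$ need not themselves satisfy the $\ell$-coordinate relation. We first try a monotone rounding: round $U$-cells down and $V$-cells up in every coordinate. Strict inequalities $u_j<v_j$ are then preserved, except when $u_j$ and $v_j$ fall in the same slab, and in that case rounding in opposite directions still preserves them. We then verify that the rounded collections have measures at least the original ones minus $O(m/K)$, which is precisely the error the counting step needs. We also expect that the endpoint cases $\alpha$ near $0$ or $1$, where $\zeta_\ell(\alpha)\to1$, remain consistent with the bound, since there the bound is trivial.
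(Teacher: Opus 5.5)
Your proposal is correct and is essentially the paper's proof: the grid with $K\approx n^{1/(m+1)}$, the union bound over at most $4^{K^m}$ pairs of occupied-cell collections, and the one-cell upward shift of the $V$-cells (the paper's $\widetilde V=V+(1,\ldots,1)$ inside a grid with $K+1$ slabs per coordinate) all appear there. The one difference is that you apply the continuous characterization of $\zeta_\ell$ quoted in the introduction to the shifted cell unions, whereas the paper proves the discrete bound of Lemma~\ref{lem:upperbound_set_size} directly from Harper's vertex-isoperimetric theorem on $[L]_0^m$.

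When you write it out, record the cost of the shift multiplicatively, as the exact factor $(1+1/K)^m$ from rescaling $K+1$ slabs into $[0,1]$. An additive $O(m/K)$ loss in the measures would not by itself give an additive $O(1/K)$ loss in $a^\alpha b^{1-\alpha}$ uniformly in $\alpha$, since $x\mapsto x^\alpha$ is not Lipschitz at $0$. Also, the final absorption needs only $\zeta_\ell\le 1$, not a lower bound.
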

Theorem~\ref{thm:upper bound s=theta(n)} provides no significant information when $s$ is small compared to $n$, but shows that $\P(|S_{m, n}|=s)$ decreases exponentially in $n$ when both $s$ and $n-s$ are $\Theta(n)$. Together, Theorems~\ref{thm:upper bound s=o(n)} and~\ref{thm:upper bound s=theta(n)} provide novel upper bounds for all $s<n$.
With them, we resolve the following conjecture from \cite{liu2025statistical} on the probability that Smith set is the entire set:

\begin{theorem}\label{thm:hamiltonian cycle}
 The probability that the majority digraph has a Hamiltonian cycle is
\[
1-\Theta_{\ell} \left( n^{-(\ell-1)/\ell} \right).
\]
\end{theorem}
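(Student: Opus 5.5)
We need to show $\P(|S_{m,n}|<n)=\Theta_\ell(n^{-(\ell-1)/\ell})$. The lower bound is immediate: a Condorcet winner forces $|S_{m,n}|=1<n$. By the result of \cite{sauermann2022probability} (equivalently the case $s=1$ of Theorem~\ref{thm:constant s}), this event has probability $\Theta_\ell(n^{-(\ell-1)/\ell})$. For $n\ge 3$, the Smith set equals $[n]$ exactly when the majority tournament is Hamiltonian, so
\[
\P(\text{no Hamiltonian cycle})\ge \P(|S_{m,n}|=1)=\Omega_\ell\bigl(n^{-(\ell-1)/\ell}\bigr).
\]

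For the upper bound we sum over $s$:
\[
\P(|S_{m,n}|<n)=\sum_{s=1}^{n-1}\P(|S_{m,n}|=s).
\]
Set $t=\min\{s,n-s\}$ and split into three ranges: $t\le T$ for a large constant $T=T_\ell$; $T<t\le \varepsilon n$ for a small constant $\varepsilon$; and $t>\varepsilon n$.

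\emph{Range $t\le T$.} Theorem~\ref{thm:upper bound s=o(n)} bounds each of the two terms $s=t$ and $s=n-t$ by $\lambda_\ell^t (t/n)^{t(\ell-1)/\ell}$. Summing over $1\le t\le T$ gives $O_\ell(n^{-(\ell-1)/\ell})$, and the term $t=1$ dominates.

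\emph{Range $T<t\le\varepsilon n$.} Theorem~\ref{thm:upper bound s=o(n)} again gives terms $\bigl(\lambda_\ell (t/n)^{(\ell-1)/\ell}\bigr)^t$. Choose $\varepsilon$ so that $\lambda_\ell\varepsilon^{(\ell-1)/\ell}\le 1/2$.
\begin{itemize}
\item For $t\le\sqrt n$, each term is at most $(\lambda_\ell n^{-(\ell-1)/(2\ell)})^t$. Since $t\ge T+1$ with $T$ large, these sum to $O(n^{-(\ell-1)/\ell})$; we need $(T+1)(\ell-1)/(2\ell)\ge(\ell-1)/\ell$, so $T\ge 1$ already suffices.
\item For $\sqrt n<t\le\varepsilon n$, each term is at most $2^{-t}\le 2^{-\sqrt n}$. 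At most $n$ such terms contribute a negligible total.
\end{itemize}

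\emph{Range $t>\varepsilon n$.} Use Theorem~\ref{thm:upper bound s=theta(n)} with $\alpha=s/n\in[\varepsilon,1-\varepsilon]$ and $\binom ns\le e^{nH(\alpha)}$, where $H$ is the natural-log entropy. This range is then exponentially small provided
\[
e^{H(\alpha)}\zeta_\ell(\alpha)<1
\]
uniformly on the compact interval $[\varepsilon,1-\varepsilon]$. The additive error $C_\ell n^{-1/(m+1)}$ is then absorbed for $n$ large, and continuity of $\zeta_\ell$ in $\alpha$ is clear from its definition as a supremum of log-affine functions of $\alpha$.

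\textbf{Main obstacle.} The hardest step is establishing $\sup_{\alpha\in[\varepsilon,1-\varepsilon]} e^{H(\alpha)}\zeta_\ell(\alpha)<1$. The plan is a direct combinatorial counting argument, which I take to be the true source of the bound in Theorem~\ref{thm:upper bound s=theta(n)}. Represent voter $j$'s ranking by i.i.d.\ uniform scores $x_{i,j}\in[0,1]$. Let $D$ be the dominating set, of size $s$, with complement $D^c$.
\begin{itemize}
\item The event that $D$ is dominating forces every $u\in D$ to beat every $v\in D^c$ in at least $\ell$ coordinates.
\item The empirical point clouds of $D$ and $D^c$ then form a pair of sets of exactly the type in the Harper characterization of $\zeta_\ell$.
\item This yields $\P(D\text{ dominating})\lesssim \zeta_\ell(\alpha)^n$.
\item Summing over the $\binom ns$ choices of $D$ gives the stated bound.
\end{itemize}
Since Theorem~\ref{thm:upper bound s=theta(n)} retains the factor $\binom ns$ explicitly, what remains is to show $\zeta_\ell(\alpha)<e^{-H(\alpha)}$.

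I would first test this at $\alpha=1/2$ using the explicit binomial formula for $\zeta_\ell$; there the requirement is $\zeta_\ell(1/2)<1/2$. Then I would try convexity of $\log\zeta_\ell$, which holds because it is a supremum of affine functions of $\alpha$. The comparison near the endpoints is delicate, since both sides tend to $1$; there the third range would instead be handled by enlarging the second range, extending Theorem~\ref{thm:upper bound s=o(n)} up to $t=\varepsilon n$. If this inequality fails for some $\ell$, I would need a sharper union bound, for instance counting only sets $D$ that are also strongly connected.
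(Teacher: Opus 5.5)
\textbf{The gap: the central-range inequality is never proved.} Your decomposition is the same as the paper's: your $\varepsilon$ plays the role of $\delta=(2\lambda_\ell)^{-\ell/(\ell-1)}$. Your lower bound and your treatment of the two ranges covered by Theorem~\ref{thm:upper bound s=o(n)} are correct. The problem is the step you flag yourself. The uniform inequality $\sup_{\alpha\in[\varepsilon,1-\varepsilon]}e^{H(\alpha)}\zeta_\ell(\alpha)<1$ is the entire content of the central range, and your proposal does not prove it; you even leave open that it might fail.

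The routes you suggest do not close it. Log-convexity of $\zeta_\ell$ is true, but it only yields chord bounds such as $\log\zeta_\ell(\alpha)\le 2\alpha\log\zeta_\ell(1/2)$ on $[0,1/2]$. These are linear in $\alpha$ near $0$, whereas $-H(\alpha)\approx-\alpha\ln(1/\alpha)$ eventually lies below every line $-c\alpha$. So a check at $\alpha=1/2$ plus convexity cannot reach small $\alpha$. Nor can you dodge small $\alpha$ by enlarging the second range. The bound $\lambda_\ell^t(t/n)^{t(\ell-1)/\ell}$ is vacuous once $\lambda_\ell(t/n)^{(\ell-1)/\ell}\ge1$. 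So the cutoff $\varepsilon$ is a fixed small constant dictated by $\lambda_\ell$, and the entropy comparison must hold all the way down to that $\varepsilon$.

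\textbf{The missing idea.} This is the paper's Lemma~\ref{lem:exponential_bound}, and it is short. Write $a_r(p)=F_{m,p}(r)$ and $b_r(p)=F_{m,1-p}(\ell-1-r)$. Since $m-\ell+1=\ell$, you have $b_r(p)=\P(\operatorname{Bin}(m,p)\ge r+\ell)$. Hence
\[
a_r(p)+b_r(p)=1-\sum_{i=r+1}^{r+\ell-1}\binom mi p^i(1-p)^{m-i}\le 1 .
\]
Weighted AM--GM then gives $a^\alpha b^{1-\alpha}\le(a+b)\,\alpha^\alpha(1-\alpha)^{1-\alpha}=(a+b)e^{-H(\alpha)}$, which already yields $\zeta_\ell\le e^{-H}$.

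Strictness, uniformly for $\alpha\in[\varepsilon,1-\varepsilon]$, comes from splitting $p$ at a small $\eta\le\min\{1/(\ell\binom m\ell),\ \varepsilon(1-\varepsilon)^{(1-\varepsilon)/\varepsilon}/2\}$:
\begin{itemize}
\item For $p\in[\eta,1-\eta]$, the $i=r+1$ term is present because $\ell\ge2$, and it gives $a+b\le 1-\eta^m$.
\item For $p<\eta$, AM--GM alone is useless because $a\to1$. However, $b_r(p)\le\ell\binom m\ell p^\ell<\eta$, so the product is at most $\eta^{1-\alpha}$. This is at most $2^{-\varepsilon}e^{-H(\alpha)}$ for $\alpha\le1-\varepsilon$, because $(1-\alpha)\alpha^{\alpha/(1-\alpha)}$ is decreasing.
\item The case $p>1-\eta$ is symmetric.
\end{itemize}
This gives $\zeta_\ell(\alpha)\le\rho\,e^{-H(\alpha)}$ with $\rho=\max\{1-\eta^m,2^{-\varepsilon}\}<1$. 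After that, your absorption of the $C_\ell n^{-1/(m+1)}$ error via $e^{H}\le 2$ finishes the proof exactly as in the paper.
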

Theorem~\ref{thm:hamiltonian cycle} further allows us to make our probability asymptotics more precise, as it lets us bound the probability that a set of alternatives form a majority cycle. More specifically, we use it to show that the bound from Theorem~\ref{thm:upper bound s=theta(n)} is tight.
\begin{theorem}\label{thm:lower bound s=theta(n)}
As $n\to\infty$, uniformly over integers $n/\sqrt{\ln n}\le s\le n-1$,
\[
\left(\frac{\P(|S_{m,n}|=s)}{\binom ns}\right)^{1/n}=\zeta_\ell\left(\frac sn\right)+O_\ell\left(n^{-1/(m+1)}\right).
\]
\end{theorem}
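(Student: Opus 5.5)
The upper bound in the statement is exactly Theorem~\ref{thm:upper bound s=theta(n)}. So the work is the matching lower bound: for $n/\sqrt{\ln n}\le s\le n-1$,
\[
\P(|S_{m,n}|=s)\ge\binom ns\Bigl(\zeta_\ell(s/n)-O_\ell(n^{-1/(m+1)})\Bigr)^n .
\]
We use the standard coupling of impartial culture with i.i.d.\ uniform scores: alternative $i$ receives $\mathbf x_i\in[0,1]^m$, and voter $j$ ranks alternatives by increasing $x_{i,j}$, with smaller meaning better. Then $i\to k$ if and only if at least $\ell$ coordinates satisfy $x_{i,j}<x_{k,j}$.

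Fix $\alpha=s/n$ and pick $(p,r)$ nearly attaining the supremum in~\eqref{eq:zeta}. Set
\[
U=\{\mathbf u:\#\{j:u_j<p\}\ge m-r\},\qquad V=\{\mathbf v:\#\{j:v_j>p\}\ge r+1\}.
\]
For $\mathbf u\in U$ and $\mathbf v\in V$, at least $(m-r)+(r+1)-m\ge 1$ coordinates lie in both index sets. That is not enough, so we redefine the sets as in Harper's extremal pair. Take $U$ to be the points with at least $m-r$ coordinates below $p$, so that $\mu_m(U)=F_{m,1-p}(r)$ up to relabeling. Take $V$ to be the points with at least $\ell+r$ coordinates above $p$. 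Then every $j$ counted in both index sets has $u_j<p<v_j$, and there are at least $(m-r)+(\ell+r)-m=\ell$ such $j$. After matching the parametrization, the product $\mu(U)^\alpha\mu(V)^{1-\alpha}$ is the quantity in~\eqref{eq:zeta}.

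For a fixed $s$-set $A$, consider the event that all $\mathbf x_i$ with $i\in A$ lie in $U$, all $\mathbf x_k$ with $k\notin A$ lie in $V$, and $A$ forms a majority cycle. On this event $A$ is dominating, so $A$ is the Smith set. These events are disjoint over distinct $A$, since the Smith set is unique. Summing over $A$ therefore gives the factor $\binom ns$.

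Conditioned on the first two parts of the event, the points in $A$ are i.i.d.\ uniform on $U$, independent of everything else. It remains to show that the tournament on $s$ i.i.d.\ points from the non-product set $U$ is Hamiltonian with probability bounded below by, say, $e^{-o(n)}$; we actually want $1-o(1)$. This is the main obstacle, because Theorem~\ref{thm:hamiltonian cycle} concerns uniform points in the whole cube.

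First I would try to decompose $U$ by which coordinate set $J$ ($|J|\ge m-r$) lies below $p$. Each piece $U_J$ is a product of intervals, and on a product of intervals the coordinate orders are again impartial culture. Hence, by Theorem~\ref{thm:hamiltonian cycle}, points landing in the same piece form a Hamiltonian cycle with probability $1-O(n^{-(\ell-1)/\ell})$, since each piece contains $\Theta(s)$ points with high probability and $s\ge n/\sqrt{\ln n}\to\infty$. The cycles on different pieces then have to be joined. Between two large random subsets, arcs in both directions exist with overwhelming probability whenever both directions have positive probability, so the union is strongly connected, hence Hamiltonian by Camion.

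If some pair of pieces is one-directional, that is, one piece always beats the other, restrict to a single piece $U_J$ instead. Taking $J$ to have the minimal number of coordinates below $p$ should keep the loss in measure only polynomial, which is harmless after taking $n$-th roots.

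The $O(n^{-1/(m+1)})$ error comes from discretizing $p$ on a grid of mesh $n^{-1/(m+1)}$, together with the continuity of $F$ in $p$. The restriction $s\ge n/\sqrt{\ln n}$ is used to absorb the polynomial loss factors into $(1-O(n^{-1/(m+1)}))^n$ after taking $n$-th roots.
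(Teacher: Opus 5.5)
\paragraph{The Hamiltonicity step on $U$ is not established.} Your outer structure matches the paper's. The upper bound is Theorem~\ref{thm:upper bound s=theta(n)}. For the lower bound you put $[s]$ in $U$ and the rest in $V$, condition, and reduce to showing that $s$ i.i.d.\ uniform points of $U$ span a Hamiltonian tournament, via product boxes, Theorem~\ref{thm:hamiltonian cycle} and Camion's theorem. The gap is in this last step.

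Read as a partition, your pieces are the atoms $U_J=\{\mathbf u:\{j:u_j<p\}=J\}$ with $|J|\ge m-r$. The claim that each receives $\Theta(s)$ points is false, because the relative volume $p^{|J|}(1-p)^{m-|J|}/\mu_m(U)$ is not bounded below:
\begin{itemize}
\item In your parametrization, as $\alpha\to1$ the maximizer has $r=\ell-1$ and $1-p\asymp(1-\alpha)^{1/\ell}\to0$. The atoms with $|J|=\ell$ then have relative volume $\asymp(1-p)^{\ell-1}\to0$.
\item Since you only take a near-maximizer, it is worse. At $s=n-1$, the choice $r=\ell-1$, $1-p=n^{-1/(\ell-1)}$ is within $O(\ln n/n)$ of $\zeta_\ell(\alpha)$, yet it leaves these atoms with $O(1)$ expected points.
\end{itemize}
For such pieces neither Theorem~\ref{thm:hamiltonian cycle} nor your ``two large random subsets'' linking applies.

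Your fallback is also invalid. Replacing $U$ by $U_J$ multiplies the probability by $(\mu_m(U_J)/\mu_m(U))^s$. That factor is exponentially small in $s$ and lowers the $n$th root by a constant factor, not a polynomial one. The case it is meant for never occurs anyway: if $|K|,|K'|\ge m-r$ then $|K'\setminus K|\le r\le\ell-1$, so both arc directions are always possible.

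\paragraph{How to fix it.} The missing idea is to use the overlapping rectangles $U_M=\{\mathbf u:u_j<p\text{ for all }j\in M\}$ with $|M|=m-r$ exactly. They have equal volume and cover $U$. So each has relative volume at least $2^{-m}$ and receives at least $s/2^{m+1}$ points except with probability $e^{-\Omega(s)}$. Theorem~\ref{thm:hamiltonian cycle} then applies to each.

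The paper links the rectangles through a single point in the cube $[0,p)^m$ (in your parametrization), which lies in every $U_M$. This is the step that needs $s\ge n/\sqrt{\ln n}$ rather than merely $s\to\infty$. Indeed $\mu_m(U)\ge(\zeta_\ell(\alpha)-\varepsilon_n)^{1/\alpha}\ge(4^{-m}/2)^{\sqrt{\ln n}}$ and $p^m\ge4^{-m}\mu_m(U)^2$, so the cube receives at least $\sqrt n$ points in expectation. The hypothesis is not there to absorb polynomial factors, which disappear under $n$th roots for every $s$.

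Your bidirectional-arc linking would also work with these rectangles, but it needs a uniform bound that you do not supply. Conditionally on each point's set $K$ of coordinates below $p$, two independent points have each arc direction with probability at least $2^{-m}$. Disjoint pairs then give failure probability $(1-2^{-m})^{\Omega(s)}$.

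Finally, the $n^{-1/(m+1)}$ in the statement comes from the grid in the upper bound. The lower bound involves no discretization of $p$.
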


The remainder of the paper is organized as follows. Section~\ref{sec:notation} gives the setup and notation. 
In Section~\ref{sec:upper bound s=o(n)}, we use an analytical method to bound $\P(|S_{m, n}|=s)$, yielding Theorem~\ref{thm:upper bound s=o(n)} and Theorem~\ref{thm:constant s}.
In Section~\ref{sec:upper bound s=theta(n)}, we use combinatorial methods to prove Theorem~\ref{thm:upper bound s=theta(n)}, after which we prove Theorem~\ref{thm:lower bound s=theta(n)} assuming that Theorem~\ref{thm:hamiltonian cycle} is true.
Then, we prove Theorem~\ref{thm:hamiltonian cycle} in Section~\ref{sec:hamiltonian cycle} by combining Theorem~\ref{thm:upper bound s=o(n)} and Theorem~\ref{thm:upper bound s=theta(n)}.
Section~\ref{sec:lemmas} proves the technical lemmas.

\subsection{Notation}\label{sec:notation}
For a nonnegative integer $k$, let $[k]=\{1,\ldots,k\}$ with $[0]=\varnothing$, and $[k]_0=\{0,\ldots,k\}$. 
Throughout, $m=2\ell-1\ge3$ is a fixed odd number of voters and $n\to\infty$ is the number of alternatives. 
We identify the alternatives with $[n]$ and write $S_{m,n}\subseteq[n]$ for the random Smith set under impartial culture. 

The random ranking profile is the ordered tuple $\boldsymbol\Pi=(\Pi_1,\ldots,\Pi_m)$ of independent uniform permutations of $[n]$; a fixed profile is denoted by $\boldsymbol\pi$. 
We use uppercase letters for random scores, lowercase letters for their realizations, and bold letters for vectors. 
Calligraphic letters denote probability events, while script letters denote families of sets or parameters. 
In particular, write
\[
\begin{aligned}
\mathcal D_s&=\{[s]\text{ is a dominating set}\},\\
\mathcal C_s&=\{[s]\text{ forms a majority cycle}\}.
\end{aligned}
\]
For $s\ge3$, the event $\{S_{m,n}=[s]\}$ equals $\mathcal D_s\cap\mathcal C_s$. We write $|A|$ for the cardinality of a finite set and $\mu_d(R)$ for the Lebesgue measure of a region $R\subseteq\mathbb R^d$.

We use standard $O$, $\Omega$, and $\Theta$ notation as $n\to\infty$. 
Subscripts indicate the parameters on which the implicit constants may depend; for example, $f(n)=O_\ell(g(n))$ means that $|f(n)|\le C_\ell|g(n)|$ for all $n\ge N_\ell$, for constants $C_\ell,N_\ell$ depending only on $\ell$. 
Unless a result fixes $s$, these constants are uniform in $s$ throughout its stated range. 
Constants denoted by $C_\ell$ may be increased from one estimate to the next.

\section{An Analytic Bound}\label{sec:upper bound s=o(n)}
In this section, we begin by proving Theorem~\ref{thm:upper bound s=o(n)}, which upper bounds the probability that the Smith set has cardinality $s$ when $s$ or $n-s$ is small compared to $n$. Afterwards, we use similar techniques to prove Theorem~\ref{thm:constant s}, which provides the asymptotic probability that the Smith set has any constant cardinality.

\subsection{Proof of Theorem~\ref{thm:upper bound s=o(n)}}
Assume $1\le s\le n/2$. Since the Smith set is unique and all alternatives are symmetric under impartial culture,
\begin{align*}
\P(|S_{m,n}|=s)&{}= \binom ns \P(S_{m,n}=[s])\\
&{}\le \binom ns \P(\mathcal D_s),
\end{align*}
where the inequality uses the fact that $[s]$ being the Smith set implies it is dominating. 
For each voter $j\in[m]$, sample $n$ independent scores $X_{1j},\ldots,X_{nj}\sim\operatorname{Unif}(0,1)$, independently across voters, and rank the alternatives in increasing order of these scores. 
The resulting rankings are independent and uniform, giving impartial culture~\cite{sauermann2022probability,liu2025statistical}. Write
\[
\mathbf X_i=(X_{i1},\ldots,X_{im}),\qquad
\mathbf X_{[s]}=(\mathbf X_1,\ldots,\mathbf X_s).
\]
We use $\mathbf x_i=(x_{i1},\ldots,x_{im})$ and $\mathbf x=(\mathbf x_1,\ldots,\mathbf x_s)\in[0,1]^{ms}$ for their realizations. For each fixed $\mathbf x$, use the conditional law obtained by freezing $\mathbf X_{[s]}=\mathbf x$ and leaving all outside scores independent and uniform on $(0,1)$. This specifies a regular conditional distribution for every $\mathbf x$. To shorten its notation, for any event $\mathcal A$ write
\[
\P(\mathcal A\mid\mathbf x)
:=\P(\mathcal A\mid\mathbf X_{[s]}=\mathbf x).
\]
For each $i\in[n]\setminus[s]$, let $\mathcal A_i$ be the event that every alternative $k\in[s]$ defeats $i$ in the majority comparison.
Since $\mathbf X_{[s]}$ has density $1$ on $[0,1]^{ms}$, conditioning gives
\begin{align*}
\P(\mathcal D_s)=\P\left(\bigcap_{i=s+1}^n \mathcal A_i\right)&{}= \int_{[0,1]^{ms}}
     \P\left(\bigcap_{i=s+1}^n \mathcal A_i\,\middle|\,\mathbf x\right)\d\mathbf x\\
&{}= \int_{[0,1]^{ms}} \P(\mathcal A_n\mid\mathbf x)^{n-s}\d\mathbf x,
\end{align*}
where the second equality uses conditional independence: given $\mathbf X_{[s]}=\mathbf x$, each $\mathcal A_i$ depends only on the outside score vector $\mathbf X_i$, and these vectors remain independent and identically distributed. 
Let $g_s(\mathbf x)$ be the conditional probability that alternative $n$ defeats at least one alternative in $[s]$. Then $\P(\mathcal A_n\mid\mathbf x)=1-g_s(\mathbf x)$, so
\[
\P\left(\bigcap_{i=s+1}^n \mathcal A_i\right)
  = \int_{[0,1]^{ms}} (1-g_s(\mathbf x))^{n-s}\d\mathbf x.
\]
The following lemma provides a bound on $g_s(\mathbf x)$ by the polynomial in \eqref{eq:analytic-polynomial}.

\begin{lemma}\label{lemma:5}
Let
\begin{equation}
P_s(\mathbf x)
  = \sum_{i=1}^s\sum_{\substack{M\subseteq[m]\\|M|=\ell}}
    \prod_{j\in M}x_{ij}.
\label{eq:analytic-polynomial}
\end{equation}
Then, for all $\mathbf x\in[0,1]^{ms}$,
\[
P_s(\mathbf x)\ge g_s(\mathbf x)
  \ge \frac{P_s(\mathbf x)}{s\binom m\ell}.%
\]
\end{lemma}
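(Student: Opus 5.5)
Both inequalities come from writing $g_s(\mathbf x)$ as the probability of a union of explicit events. Fix $\mathbf x\in[0,1]^{ms}$ and let $\mathbf Y=\mathbf X_n=(Y_1,\dots,Y_m)$ be uniform on $[0,1]^m$. Alternatives are ranked in increasing order of score, and $m=2\ell-1$. So alternative $n$ defeats $i\in[s]$ exactly when at least $\ell$ voters $j$ satisfy $Y_j<x_{ij}$. For each $i\in[s]$ and each $M\subseteq[m]$ with $|M|=\ell$, define
\[
\mathcal E_{i,M}=\{Y_j<x_{ij}\text{ for all }j\in M\}.
\]
Then the event that $n$ defeats $i$ is $\bigcup_{|M|=\ell}\mathcal E_{i,M}$, and hence
\[
g_s(\mathbf x)=\P\Bigl(\bigcup_{i\in[s]}\bigcup_{|M|=\ell}\mathcal E_{i,M}\Bigr).
\]
Since the coordinates of $\mathbf Y$ are independent and uniform, $\P(\mathcal E_{i,M})=\prod_{j\in M}x_{ij}$. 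Summing over all pairs $(i,M)$ therefore gives exactly $P_s(\mathbf x)$.

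\emph{Upper bound.} By the union bound,
\[
g_s(\mathbf x)\le\sum_{i,M}\P(\mathcal E_{i,M})=P_s(\mathbf x).
\]

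\emph{Lower bound.} The union contains each single event $\mathcal E_{i,M}$, so $g_s(\mathbf x)\ge\max_{i,M}\P(\mathcal E_{i,M})$. There are exactly $s\binom m\ell$ pairs $(i,M)$, and a maximum is at least the average, so
\[
\max_{i,M}\P(\mathcal E_{i,M})\ge\frac{1}{s\binom m\ell}\sum_{i,M}\P(\mathcal E_{i,M})=\frac{P_s(\mathbf x)}{s\binom m\ell}.
\]

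Neither step poses a real obstacle. The only point needing care is the direction of the inequalities in the event "$n$ defeats $i$": it must use $Y_j<x_{ij}$, consistent with ranking in increasing order of scores. Ties have probability zero, so strict versus non-strict inequalities do not matter.
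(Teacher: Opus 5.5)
Your proof is correct and follows essentially the same argument as the paper. You use the same events $\mathcal E_{i,M}$ and the union bound for the upper inequality, and your ``maximum is at least the average'' step is exactly the paper's summation of $g_s(\mathbf x)\ge\P(\mathcal E_{i,M}\mid\mathbf x)$ over all $s\binom m\ell$ pairs $(i,M)$.
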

\begin{proof}
Alternative $n$ defeats some $i\in[s]$ if and only if there is a set $M\subseteq[m]$ with $|M|=\ell$ such that all voters $j\in M$ prefer $n$ to $i$. For each such pair $(i,M)$, let $\mathcal E_{i,M}$ be this event; equivalently, $X_{nj}<X_{ij}$ for every $j\in M$. Then,
\[
g_s(\mathbf x)
  = \P\Bigg(
    \bigcup_{i\in[s]}\bigcup_{\substack{M\subseteq[m]\\|M|=\ell}}\mathcal E_{i,M}
    \,\Bigg|\,\mathbf x\Bigg).
\]
For any fixed $i$ and $M$, we have
\[
\P(\mathcal E_{i,M}\mid\mathbf x)
  = \P\Bigg(\bigcap_{j\in M}\{x_{ij}>X_{nj}\}\,\Bigg|\,\mathbf x\Bigg).
\]
Given $\mathbf X_{[s]}=\mathbf x$, each $X_{ij}$ with $i\in[s]$ equals $x_{ij}$, while the scores $X_{nj}$ remain independent across voters. Hence the events $\{X_{nj}<x_{ij}\}$ are conditionally independent, and
\[
\P(\mathcal E_{i,M}\mid\mathbf x)
  = \prod_{j\in M}\P(x_{ij}>X_{nj}\mid\mathbf x)
  = \prod_{j\in M}x_{ij}.
\]
Therefore, for all sets $M\subseteq[m]$ with $|M|=\ell$ and $i\in[s]$,
\begin{equation*}
g_s(\mathbf x)\ge \P(\mathcal E_{i,M}\mid\mathbf x)=\prod_{j\in M}x_{ij}.
\end{equation*}
Summing these inequalities over all $s\binom m\ell$ pairs $(M,i)$ gives us
\[
s\binom m\ell g_s(\mathbf x)\ge P_s(\mathbf x),
\qquad
g_s(\mathbf x)\ge\frac{P_s(\mathbf x)}{s\binom m\ell}.
\]
Similarly,
\[
g_s(\mathbf x)
  \le \sum_{i=1}^s\sum_{\substack{M\subseteq[m]\\|M|=\ell}}
      \P(\mathcal E_{i,M}\mid\mathbf x)
  = P_s(\mathbf x).
\]
\end{proof}
Since $e^{-x}\ge 1-x$ for all $x\in[0,1]$,
\begin{align}
\int_{[0,1]^{ms}}(1-g_s(\mathbf x))^{n-s}\d\mathbf x&{}\le \int_{[0,1]^{ms}}e^{-(n-s)g_s(\mathbf x)}\d\mathbf x\notag\\
&{}\le \int_{[0,1]^{ms}}
    \exp\left(-\frac{n-s}{s\binom m\ell}P_s(\mathbf x)\right)\d\mathbf x\notag\\
&{}\le \left(\frac{n-s}{s\binom m\ell}\right)^{-ms/\ell}
    \int_{[0,\infty)^{ms}}e^{-P_s(\mathbf z)}\d\mathbf z,
\label{eq:analytic-integral-upper}
\end{align}
where we used the substitution $\mathbf z=\bigl((n-s)/(s\binom m\ell)\bigr)^{1/\ell}\mathbf x$ and homogeneity of $P_s$ of degree $\ell$. 
The Jacobian gives the prefactor. 
The transformed domain is a finite cube in the nonnegative orthant; enlarging it to the whole orthant preserves the upper bound because the integrand is nonnegative. 
The final integral can be factorized by the following lemma.
\begin{lemma}\label{lemma:6}
We have
\[
\int_{[0,\infty)^{ms}}e^{-P_s(\mathbf z)}\d\mathbf z=c_\ell^s,
\]
where $c_\ell$ is the finite constant defined in \eqref{eq:constants} and depends only on $\ell$.
\end{lemma}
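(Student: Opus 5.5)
The plan is to exploit that $P_s$ is a sum of functions each depending on a disjoint block of variables. By definition \eqref{eq:analytic-polynomial} and \eqref{eq:sigma},
\[
P_s(\mathbf z)=\sum_{i=1}^s\sigma_\ell(\mathbf z_i),\qquad \mathbf z=(\mathbf z_1,\ldots,\mathbf z_s),\ \mathbf z_i\in[0,\infty)^m.
\]
The term $\sigma_\ell(\mathbf z_i)$ involves only the $m$ coordinates of the block $\mathbf z_i$, and these blocks are pairwise disjoint. Hence the integrand factors as
\[
e^{-P_s(\mathbf z)}=\prod_{i=1}^s e^{-\sigma_\ell(\mathbf z_i)},
\]
a product of nonnegative measurable functions of disjoint sets of variables. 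The domain $[0,\infty)^{ms}$ is likewise the product $\prod_{i=1}^s[0,\infty)^m$.

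Applying Tonelli's theorem to this nonnegative integrand, with no integrability hypothesis needed, gives
\[
\int_{[0,\infty)^{ms}}e^{-P_s(\mathbf z)}\,\d\mathbf z=\prod_{i=1}^s\int_{[0,\infty)^m}e^{-\sigma_\ell(\mathbf z_i)}\,\d\mathbf z_i=c_\ell^s,
\]
where each factor equals $c_\ell$ by \eqref{eq:constants}. This identity holds in $[0,\infty]$. Finiteness of $c_\ell$, and hence of the product, is exactly \cite[Lemma~3]{sauermann2022probability}, which the paper already takes as given.

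The only real content, finiteness of $c_\ell$, is imported, so no step is a genuine obstacle. The one point to state explicitly is that Tonelli, rather than Fubini, justifies the factorization, so the computation is valid before finiteness is known. Finally, $c_\ell$ depends only on $m=2\ell-1$ and $\ell$, that is, only on $\ell$.
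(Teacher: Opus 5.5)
Your proposal is correct and matches the paper's proof. Both rewrite $P_s(\mathbf z)=\sum_{i=1}^s\sigma_\ell(\mathbf z_i)$, factor the nonnegative integrand over the disjoint coordinate blocks via Tonelli's theorem to obtain $c_\ell^s$, and take the finiteness of $c_\ell$ from \cite[Lemma~3]{sauermann2022probability}.
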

The proof of Lemma~\ref{lemma:6} involves standard integral simplifications and can be found in Section~\ref{sec:lemmas-1}. Substituting its conclusion into \eqref{eq:analytic-integral-upper} produces
\[
\binom ns\P(\mathcal D_s)
  \le \binom ns\left(\frac{n-s}{s\binom m\ell}\right)^{-ms/\ell}c_\ell^s.
\]
Now, we bound the term involving $n-s$:
\begin{equation*}
(n-s)^{-ms/\ell}{}\le (n/2)^{-ms/\ell}=n^{-ms/\ell}2^{ms/\ell}{}\le n^{-ms/\ell}e^{ms/\ell}.
\end{equation*}
Next, we address the binomial and factorial terms. Applying the standard bound $\binom ns\le n^s/s!$ and the factorial lower bound $s!\ge\sqrt{2\pi s}(s/e)^s\ge(s/e)^s$, we obtain
\[
\binom ns s^{ms/\ell}\le n^s e^s s^{s(m/\ell-1)}.
\]
Therefore, using the definition of $\lambda_\ell$ in \eqref{eq:constants},
\begin{align}
\P(|S_{m,n}|=s)&{}\le\binom ns\P(\mathcal D_s)\notag\\*
&{}\le \left(\frac sn\right)^{s(\ell-1)/\ell}
    \left[c_\ell e^{(3\ell-1)/\ell}\binom m\ell^{m/\ell}\right]^s{}= \lambda_\ell^s\left(\frac sn\right)^{s(\ell-1)/\ell}.
\label{eq:analytic-domination-bound}
\end{align}
To extend \eqref{eq:analytic-domination-bound} to $\P(|S_{m,n}|=n-s)$, consider the following lemma.
\begin{lemma}\label{lem:s to n-s}
For all $n\in\mathbb N$ and all $1\le s\le n-1$,
\[
\P(\mathcal D_s)
  =\P(\mathcal D_{n-s}).
\]
\end{lemma}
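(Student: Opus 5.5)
The plan is to find a measure-preserving bijection on ranking profiles that carries the event $\mathcal D_s$ onto $\mathcal D_{n-s}$. The natural choice is to reverse every voter's ranking and then relabel the alternatives.

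\textbf{Step 1: reversal.} Consider the map $R$ sending a profile $\boldsymbol\pi=(\pi_1,\ldots,\pi_m)$ to $(\bar\pi_1,\ldots,\bar\pi_m)$, where each $\bar\pi_j$ is voter $j$'s ranking read in reverse order. In score language, this replaces each $X_{ij}$ by $1-X_{ij}$, which again gives independent uniform scores.

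\begin{itemize}
\item Under reversal, voter $j$ prefers $i$ to $k$ in $R(\boldsymbol\pi)$ if and only if voter $j$ prefers $k$ to $i$ in $\boldsymbol\pi$.
\item Since $m$ is odd, there are no ties. Hence the majority digraph of $R(\boldsymbol\pi)$ is the majority digraph of $\boldsymbol\pi$ with every arc reversed.
\item $R$ is a bijection on the $(n!)^m$ profiles, so it preserves the uniform measure.
\end{itemize}

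\textbf{Step 2: complements.} A set $D$ is dominating exactly when every arc between $D$ and $[n]\setminus D$ points out of $D$. After reversing all arcs, every arc between $[n]\setminus D$ and $D$ points out of $[n]\setminus D$. Thus $[n]\setminus D$ is dominating in the reversed digraph; it is nonempty because $s\le n-1$. So $R$ maps the event $\{[s]\text{ dominating}\}$ bijectively onto the event $\{[n]\setminus[s]\text{ dominating}\}$, and therefore
\[
\P(\mathcal D_s)=\P\bigl(\{s+1,\ldots,n\}\text{ is dominating}\bigr).
\]

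\textbf{Step 3: relabeling.} Let $\tau$ be any permutation of $[n]$ mapping $\{s+1,\ldots,n\}$ onto $[n-s]$. Applying $\tau$ to the names of the alternatives in every ballot is another measure-preserving bijection under impartial culture. It sends the event that $\{s+1,\ldots,n\}$ is dominating to $\mathcal D_{n-s}$. This is the same exchangeability already used to write $\P(|S_{m,n}|=s)=\binom ns\P(S_{m,n}=[s])$.

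Combining Steps 1--3 gives $\P(\mathcal D_s)=\P(\mathcal D_{n-s})$. There is no real obstacle. The only care needed is in the arc-reversal claim: it requires strict majorities, which holds because $m=2\ell-1$ is odd.
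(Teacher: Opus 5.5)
Your proof is correct and matches the paper's argument: reverse every ranking so that $[s]$ dominating $[n]\setminus[s]$ becomes $[n]\setminus[s]$ dominating $[s]$, then relabel by exchangeability to reach $\mathcal D_{n-s}$. One small remark: the oddness of $m$ is not actually needed for the arc-reversal step, since a strict majority for $i$ over $k$ becomes a strict majority for $k$ over $i$ after reversal, but invoking it does no harm.
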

\begin{proof}
Let $\boldsymbol\pi$ be an $m$-tuple of rankings on $[n]$, and let $\boldsymbol\pi^{\mathrm{rev}}$ be obtained by reversing each ranking. These profiles have equal probability under impartial culture. Furthermore, if alternative $i$ defeats $k$ under $\boldsymbol\pi$, then $k$ defeats $i$ under $\boldsymbol\pi^{\mathrm{rev}}$. Thus, $[s]$ dominates its complement under $\boldsymbol\pi$ if and only if $[n]\setminus[s]$ dominates $[s]$ under $\boldsymbol\pi^{\mathrm{rev}}$. Hence,%
\[
\P(\mathcal D_s)
  =\P([n]\setminus[s]\text{ is a dominating set}).
\]
Furthermore, $[n]\setminus[s]$ and $[n-s]$ have the same cardinality, so they have the same probability of being a dominating set by symmetry, and the conclusion follows.
\end{proof}
This lemma transfers the same bound to Smith sets of size $n-s$, obtaining the following corollary.
\begin{corollary}\label{cor:Bounding n-s}
For all $1\le s\le n/2$,
\[
\P(|S_{m,n}|=n-s)
  \le\lambda_\ell^s\left(\frac sn\right)^{s(\ell-1)/\ell}.
\]
\end{corollary}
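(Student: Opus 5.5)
The plan is to reduce the corollary to the domination bound already established for size-$s$ sets, via Lemma~\ref{lem:s to n-s}. First, by uniqueness of the Smith set and the symmetry of impartial culture under relabeling of alternatives, we have
\[
\P(|S_{m,n}|=n-s)=\binom{n}{n-s}\P(S_{m,n}=[n-s])\le\binom{n}{n-s}\P(\mathcal D_{n-s}),
\]
using only that the Smith set is dominating. This step does not need $s\ge3$ or any cycle structure; it is the same inequality used at the start of the proof of Theorem~\ref{thm:upper bound s=o(n)}.

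Next, since $1\le s\le n/2\le n-1$, Lemma~\ref{lem:s to n-s} gives $\P(\mathcal D_{n-s})=\P(\mathcal D_s)$, and $\binom{n}{n-s}=\binom ns$. Hence
\[
\P(|S_{m,n}|=n-s)\le\binom ns\P(\mathcal D_s).
\]
The right-hand side is exactly the quantity bounded in \eqref{eq:analytic-domination-bound}. That chain of estimates holds verbatim for $1\le s\le n/2$: it uses Lemmas~\ref{lemma:5} and~\ref{lemma:6}, the substitution argument, the bound $(n-s)\ge n/2$, and the estimates $\binom ns\le n^s/s!$ and $s!\ge(s/e)^s$. So it yields $\lambda_\ell^s(s/n)^{s(\ell-1)/\ell}$.

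There is no real obstacle. The only point to check is that \eqref{eq:analytic-domination-bound} was derived as a bound on $\binom ns\P(\mathcal D_s)$ itself, rather than only on $\P(|S_{m,n}|=s)$. This holds, because the middle expression in that display is $\binom ns\P(\mathcal D_s)$. Combining the corollary with \eqref{eq:analytic-domination-bound} then gives the maximum stated in Theorem~\ref{thm:upper bound s=o(n)}.
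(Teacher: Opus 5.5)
Your proof is correct and matches the paper's argument: you bound $\P(|S_{m,n}|=n-s)$ by $\binom{n}{n-s}\P(\mathcal D_{n-s})$, then use Lemma~\ref{lem:s to n-s} and $\binom{n}{n-s}=\binom ns$ to reduce to the bound on $\binom ns\P(\mathcal D_s)$ in \eqref{eq:analytic-domination-bound}. The paper runs the same chain of (in)equalities in reverse order.
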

The proof is provided in Section~\ref{sec:lemmas-1}, which applies Lemma~\ref{lem:s to n-s} to the bound in \eqref{eq:analytic-domination-bound}.
This completes the proof of Theorem~\ref{thm:upper bound s=o(n)}.

\subsection{Proof of Theorem~\ref{thm:constant s}}
First, note that when $s$ is constant, Theorem~\ref{thm:upper bound s=o(n)} gives
\[
\P(|S_{m,n}|=s)
  \le\lambda_\ell^s\left(\frac sn\right)^{s(\ell-1)/\ell}
  =O_{\ell,s}\left(n^{-s(\ell-1)/\ell}\right).
\]
Thus, it suffices to show that $\P(|S_{m,n}|=s)=\Omega_{\ell,s}\bigl(n^{-s(\ell-1)/\ell}\bigr)$. 
Fix $s\ge3$ and take $n\ge2s$. Recall that
\begin{align*}
\P(|S_{m,n}|=s)&{}=\binom ns\P(S_{m,n}=[s])\\
&{}=\binom ns\P\left(\mathcal C_s\cap\mathcal D_s\right).
\end{align*}
Let $\boldsymbol\pi^\star$ be an $m$-tuple of rankings on $[s]$ whose majority tournament has a Hamiltonian cycle. 
Such a profile exists for every $s\ge3$: partition $[s]$ into three nonempty disjoint sets $A,B,C$, and have the first three voters rank
\[
A\succ B\succ C,\qquad B\succ C\succ A,\qquad C\succ A\succ B,
\]
using the same fixed ordering within each block for all three voters. 
The majority arcs follow these orders within blocks and run from $A$ to $B$, from $B$ to $C$, and from $C$ to $A$. 
Concatenating the three block orders therefore gives a Hamiltonian cycle. 
If $m>3$, add pairs of voters with reversed rankings; each pair leaves every majority margin unchanged.

For this fixed profile, define the occurrence event
\[
\mathcal E_{\pi^\star}
=\{\boldsymbol\Pi|_{[s]}=\boldsymbol\pi^\star\}.
\]
Uniformity and independence give $\P(\mathcal E_{\pi^\star})=(s!)^{-m}$.
Since $\mathcal E_{\pi^\star}\subseteq\mathcal C_s$,
\begin{align}
\P(|S_{m,n}|=s)&{}\ge\binom ns\P\left(\mathcal C_s\cap\bigcap_{i=s+1}^n \mathcal A_i\,\middle|\,\mathcal E_{\pi^\star}\right)\P(\mathcal E_{\pi^\star})\notag\\
&{}=\frac{\binom ns}{(s!)^m}
    \P\left(\bigcap_{i=s+1}^n \mathcal A_i\,\middle|\,\mathcal E_{\pi^\star}\right).
\label{eq:analytic-profile-bound}
\end{align}
Write $\pi_j^\star(1),\ldots,\pi_j^\star(s)$ for voter $j$'s alternatives in decreasing preference order. 
Define the realization region by
\[
R_{\pi^\star}
=\left\{\mathbf x\in(0,1)^{ms}:
 x_{\pi_j^\star(1),j}<\cdots<x_{\pi_j^\star(s),j}
 \text{ for all }j\in[m]\right\}.
\]
Then the event $\mathcal E_{\pi^\star}$ agrees almost surely with $\{\mathbf X_{[s]}\in R_{\pi^\star}\}$. 
The region is the Cartesian product of $m$ ordered $s$-simplices, each of volume $1/s!$, so $\mu_{ms}(R_{\pi^\star})=(s!)^{-m}$. 
Consequently, the conditional density of $\mathbf X_{[s]}$ given $\mathcal E_{\pi^\star}$ is $(s!)^m$ on this region and zero elsewhere, yielding
\begin{equation}
\P\left(\bigcap_{i=s+1}^n \mathcal A_i\,\middle|\,\mathcal E_{\pi^\star}\right)
  =(s!)^m\int_{R_{\pi^\star}}
    \P\left(\bigcap_{i=s+1}^n \mathcal A_i\,\middle|\,\mathbf x\right)\d\mathbf x.
\label{eq:analytic-profile-density}
\end{equation}
Substituting \eqref{eq:analytic-profile-density} into \eqref{eq:analytic-profile-bound} gives
\begin{equation}
\P(|S_{m,n}|=s)
  \ge\binom ns\int_{R_{\pi^\star}}
    \P\left(\bigcap_{i=s+1}^n \mathcal A_i\,\middle|\,\mathbf x\right)\d\mathbf x.
\label{eq:analytic-smith-integral}
\end{equation}
The events $\mathcal A_i$ are independent and identically distributed given $\mathbf X_{[s]}=\mathbf x$, and $\P(\mathcal A_n\mid\mathbf x)=1-g_s(\mathbf x)$. 
Thus,
\[
\int_{R_{\pi^\star}}
  \P\left(\bigcap_{i=s+1}^n \mathcal A_i\,\middle|\,\mathbf x\right)\d\mathbf x
  =\int_{R_{\pi^\star}}(1-g_s(\mathbf x))^{n-s}\d\mathbf x.
\]
Put $t=(2s\binom m\ell)^{-1/\ell}\in(0,1)$ and let $R'_{\pi^\star}=tR_{\pi^\star}$. 
Positive dilation preserves all the strict ordering inequalities, so $R'_{\pi^\star}=R_{\pi^\star}\cap(0,t)^{ms}\subseteq R_{\pi^\star}$ and
\[
\mu_{ms}(R'_{\pi^\star})=\frac{1}{(s!)^m}\left(2s\binom m\ell\right)^{-ms/\ell}.
\]
Furthermore, the polynomial $P_s$ defined in \eqref{eq:analytic-polynomial} satisfies, for all $\mathbf x\in R'_{\pi^\star}$,
\[
P_s(\mathbf x)
  \le s\binom m\ell\left[\left(2s\binom m\ell\right)^{-1/\ell}\right]^\ell
  =\frac12.
\]
Therefore, since $1-x\ge e^{-2x}$ for all $x\in[0,1/2]$,
\begin{align}
\int_{R_{\pi^\star}}(1-g_s(\mathbf x))^{n-s}\d\mathbf x&{}\ge\int_{R'_{\pi^\star}}(1-g_s(\mathbf x))^{n-s}\d\mathbf x\notag\\
&{}\ge\int_{R'_{\pi^\star}}(1-P_s(\mathbf x))^{n-s}\d\mathbf x\notag\\
&{}\ge\int_{R'_{\pi^\star}}e^{-2(n-s)P_s(\mathbf x)}\d\mathbf x.
\label{eq:analytic-restricted-integral}
\end{align}
Make the substitution $\mathbf z=(n-s)^{1/\ell}\mathbf x$. 
Homogeneity gives $P_s(\mathbf z)=(n-s)P_s(\mathbf x)$ and the Jacobian is $(n-s)^{-ms/\ell}$.
Moreover, $R'_{\pi^\star}\subseteq(n-s)^{1/\ell}R'_{\pi^\star}$: for every $\mathbf z\in R'_{\pi^\star}$, the vector $(n-s)^{-1/\ell}\mathbf z$ satisfies the same strict ordering inequalities and remains in $(0,t)^{ms}$. 
Thus,
\begin{align}
\int_{R'_{\pi^\star}}e^{-2(n-s)P_s(\mathbf x)}\d\mathbf x
&{}=(n-s)^{-ms/\ell}
 \int_{(n-s)^{1/\ell}R'_{\pi^\star}}e^{-2P_s(\mathbf z)}\d\mathbf z\notag\\
&{}\ge(n-s)^{-ms/\ell}
 \int_{R'_{\pi^\star}}e^{-2P_s(\mathbf z)}\d\mathbf z\notag\\
&{}\ge(n-s)^{-ms/\ell}e^{-1}\mu_{ms}(R'_{\pi^\star})\notag\\
&{}=\frac{(n-s)^{-ms/\ell}e^{-1}}{(s!)^m}
 \left(2s\binom m\ell\right)^{-ms/\ell}.
\label{eq:analytic-scaled-integral}
\end{align}
Since $n\ge2s$, we have $\binom ns\ge(n/2)^s/s!$ and $(n-s)^{-ms/\ell}\ge n^{-ms/\ell}$. 
Applying conditional independence in \eqref{eq:analytic-smith-integral}, followed by \eqref{eq:analytic-restricted-integral} and \eqref{eq:analytic-scaled-integral}, therefore gives
\begin{equation}
\P(|S_{m,n}|=s)
\ge\frac{e^{-1}}{2^s(s!)^{m+1}}
 \left(2s\binom m\ell\right)^{-ms/\ell}
 n^{-s(\ell-1)/\ell}.
\label{eq:analytic-fixed-lower}
\end{equation}
The coefficient is strictly positive and depends only on $\ell$ and the fixed $s$, proving the required lower bound.
This completes the proof of Theorem~\ref{thm:constant s}.

\section{A Combinatorial Bound}\label{sec:upper bound s=theta(n)}
In this section, we prove Theorem~\ref{thm:upper bound s=theta(n)}, which provides an upper bound on the probability that the Smith set has cardinality $s=\Theta(n)$. 
We will then prove Theorem~\ref{thm:lower bound s=theta(n)} assuming Theorem~\ref{thm:hamiltonian cycle}, which provides a tight asymptotic bound on this probability. 
This is possible because, while Theorem~\ref{thm:hamiltonian cycle} is proved later in the paper, its proof relies solely on Theorems~\ref{thm:upper bound s=o(n)} and~\ref{thm:upper bound s=theta(n)}.

\subsection{Proof of Theorem~\ref{thm:upper bound s=theta(n)}}
Fix $3\le s\le n-1$ and let $\alpha=s/n\in(0,1)$. 
Set
\[
L=\left\lfloor n^{1/(m+1)}\right\rfloor.
\]
Partition $[0,1)^m$ into $L^m$ half-open hypercubes of side length $1/L$, and for $\mathbf x=(x_1,\dots,x_m)\in[0,1)^m$ let
\begin{equation}\label{eq:comb-grid-map}
    \kappa_L(\mathbf x)
    =\bigl(\lfloor Lx_1\rfloor,\lfloor Lx_2\rfloor,
      \dots,\lfloor Lx_m\rfloor\bigr)
\end{equation}
be its hypercube index. 
Call a pair of deterministic sets $U,V\subseteq[L-1]_0^m$ admissible if, for every $\mathbf u\in U$ and $\mathbf v\in V$, at least $\ell$ coordinates $j\in[m]$ satisfy $u_j\le v_j$. 
The following lemma, with proof provided in Section~\ref{sec:lemmas-2}, controls the sizes of all such pairs.
\begin{lemma}\label{lem:upperbound_set_size}
Let $L\ge1$ be an integer and $0<\alpha<1$. Suppose $U,V\subseteq[L-1]_0^m$ are sets such that, for every $\mathbf u\in U$ and $\mathbf v\in V$, at least $\ell$ coordinates $j\in[m]$ satisfy $u_j\le v_j$. Then
\[
    |U|^\alpha|V|^{1-\alpha}
    \le (L+1)^m\zeta_\ell(\alpha).
\]
\end{lemma}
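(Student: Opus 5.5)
The plan is to reduce the discrete statement to the continuous extremal characterization of $\zeta_\ell(\alpha)$ quoted after \eqref{eq:zeta} (from \cite{HARPER1999285}). That characterization says $\zeta_\ell(\alpha)$ is the supremum of $\mu_m(U')^\alpha\mu_m(V')^{1-\alpha}$ over measurable $U',V'\subseteq[0,1]^m$ such that every $\mathbf u'\in U'$ and $\mathbf v'\in V'$ have at least $\ell$ coordinates with $u'_j<v'_j$. So it suffices to build from an admissible discrete pair $(U,V)$ a continuous pair $(U',V')$ satisfying the strict-inequality condition, with $\mu_m(U')=|U|/(L+1)^m$ and $\mu_m(V')=|V|/(L+1)^m$. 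Then
\[
|U|^\alpha|V|^{1-\alpha}=(L+1)^m\mu_m(U')^\alpha\mu_m(V')^{1-\alpha}\le (L+1)^m\zeta_\ell(\alpha).
\]

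The construction uses a grid of mesh $1/(L+1)$ with a one-cell shift between $U$ and $V$; the shift converts the weak inequality $u_j\le v_j$ into a strict one. Set
\[
U'=\bigcup_{\mathbf u\in U}\prod_{j=1}^m\Bigl[\tfrac{u_j}{L+1},\tfrac{u_j+1}{L+1}\Bigr),\qquad
V'=\bigcup_{\mathbf v\in V}\prod_{j=1}^m\Bigl[\tfrac{v_j+1}{L+1},\tfrac{v_j+2}{L+1}\Bigr).
\]
Since all coordinates lie in $[L-1]_0$, we have $v_j+2\le L+1$, so both sets lie in $[0,1]^m$. Each is a disjoint union of $|U|$, respectively $|V|$, cubes of volume $(L+1)^{-m}$, which gives the claimed measures.

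It remains to check the strict condition. Take $\mathbf x\in U'$ in the cube of $\mathbf u$ and $\mathbf y\in V'$ in the cube of $\mathbf v$. For each of the at least $\ell$ coordinates with $u_j\le v_j$,
\[
x_j<\tfrac{u_j+1}{L+1}\le\tfrac{v_j+1}{L+1}\le y_j,
\]
so $x_j<y_j$, as required.

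The only substantive input is the continuous extremal statement itself, and I expect that to be the main obstacle if it must be proved rather than cited. In that case I would first show by a compression/symmetrization argument, coordinate by coordinate, that $U'$ may be taken to be a down-set and $V'$ an up-set. Next I would show that the extremal pairs are half-space-like threshold sets: $U'=\{\text{at most } r \text{ coordinates exceed } p\}$ in the appropriate one-dimensional form, with $V'$ the complementary threshold family. Their volumes are the binomial tails $F_{m,p}(r)$ and $F_{m,1-p}(\ell-1-r)$, which recovers \eqref{eq:zeta}. Since the paper explicitly attributes this equality to \cite{HARPER1999285}, I would cite it and keep the proof of the lemma to the discretization step above.
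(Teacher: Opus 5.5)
Your proof is correct, and it rests on the same key device as the paper. You shift $V$ by one cell so that $u_j\le v_j$ becomes a strict inequality; this is exactly why $L+1$ cells per axis, and hence the factor $(L+1)^m$, appear. The difference is that you apply the shift on the continuous side and then invoke the extremal characterization of $\zeta_\ell(\alpha)$ quoted after \eqref{eq:zeta} as a black box.

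The paper stays discrete. It sets $\widetilde V=V+(1,\ldots,1)\subseteq[L]_0^m$ and notes that the order hypothesis forces Hamming distance at least $\ell$ between $U$ and $\widetilde V$. So $\widetilde V$ misses the Hamming neighbourhood $N_{\ell-1}(U)$. It then applies Harper's vertex-isoperimetric inequality on the Hamming graph $[L]_0^m$, with $p_r$ chosen so that $F_{m,p_r}(r)=|U|/(L+1)^m$. Complementing, $1-F_{m,p_r}(r+\ell-1)=F_{m,1-p_r}(\ell-1-r)$, which produces the two binomial tails of \eqref{eq:zeta} explicitly.

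What each route buys:
\begin{itemize}
\item Your route is shorter. It shows the discrete lemma is a formal consequence of the continuous extremal statement, and your embedding is verified correctly: the cubes are disjoint, fit in $[0,1)^m$, and give strict separation.
\item The paper's route shows where the specific tails come from. It uses Harper's theorem in its native Hamming-graph form rather than the informally cited ``equals the supremum'' assertion. It actually proves the bound for any $U,\widetilde V\subseteq[L]_0^m$ at Hamming distance at least $\ell$, a weaker hypothesis than order separation.
\end{itemize}

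If you ever need to unpack your citation, the upper-bound half of the continuous characterization follows from the same chain: order separation $\Rightarrow$ Hamming separation $\Rightarrow$ isoperimetry $\Rightarrow$ tail matching. Your fallback sketch (compressions to down-/up-sets, then identifying threshold extremizers) would amount to reproving Harper's theorem. As written, it is not a proof.
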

Let $\mathscr W$ be the family of all admissible pairs.
For each fixed pair $(U,V)\in\mathscr W$, let $\mathcal B(U,V)$ be the event that $\kappa_L(\mathbf X_i)\in U$ for all $i\in[s]$ and $\kappa_L(\mathbf X_k)\in V$ for all $k\in[n]\setminus[s]$. 
Since every cell has probability $L^{-m}$, independence and Lemma~\ref{lem:upperbound_set_size} give
\begin{equation}\label{eq:comb-cell-probability}
\begin{aligned}
    \P\bigl(\mathcal B(U,V)\bigr)
    &=\frac{|U|^s|V|^{n-s}}{L^{mn}}
      =\left(\frac{|U|^\alpha|V|^{1-\alpha}}{L^m}\right)^n\\
    &\le\left(1+\frac1L\right)^{mn}\zeta_\ell(\alpha)^n.
\end{aligned}
\end{equation}
On $\mathcal D_s$, the sets of hypercube indices occupied by the alternatives in $[s]$ and $[n]\setminus[s]$ form an admissible pair: each comparison $X_{ij}<X_{kj}$ implies $\lfloor LX_{ij}\rfloor\le\lfloor LX_{kj}\rfloor$. 
This shows that $\mathcal D_s\subseteq\bigcup_{(U,V)\in\mathscr W}\mathcal B(U,V)$. 
Moreover, $|\mathscr W|\le2^{2L^m}$, because each of $U$ and $V$ is a subset of a grid of size $L^m$. 
Hence the union bound and~\eqref{eq:comb-cell-probability} give
\begin{align}
\P(\mathcal D_s)&{}\le\sum_{(U,V)\in\mathscr W}\P\bigl(\mathcal B(U,V)\bigr)\notag\\
&{}\le2^{2L^m}\left(1+\frac1L\right)^{mn}\zeta_\ell(\alpha)^n.
\label{eq:comb-union-bound}
\end{align}
For $n\ge2^{m+1}$, we have $n^{1/(m+1)}\ge2$ and hence
\[
\frac12 n^{1/(m+1)}\le L\le n^{1/(m+1)}.
\]
Using $\ln(1+t)\le t$ for $t\ge0$, we obtain the uniform bound
\begin{equation}\label{eq:comb-grid-cost}
\begin{aligned}
    2^{2L^m}\left(1+\frac1L\right)^{mn}
    &\le\exp\left(L^m\ln4+\frac{mn}{L}\right)\\
    &\le\exp\left((\ln4+2m)n^{m/(m+1)}\right).
\end{aligned}
\end{equation}
By~\eqref{eq:comb-union-bound} and~\eqref{eq:comb-grid-cost}, the combined cost is $\exp\bigl(O_\ell(n^{m/(m+1)})\bigr)$. Since $\zeta_\ell(\alpha)\le1$ and $e^t\le1+2t$ for $0\le t\le1$, for all sufficiently large $n$ depending only on $\ell$,
\begin{align}
\P(\mathcal D_s)&{}\le\left[\zeta_\ell(\alpha)
    \exp\left((\ln4+2m)n^{-1/(m+1)}\right)\right]^n\notag\\
&{}\le\left[\zeta_\ell(\alpha)
    +2(\ln4+2m)n^{-1/(m+1)}\right]^n.
\label{eq:comb-domination-bound}
\end{align}
Finally, $\P(S_{m,n}=[s])\leq \P(\mathcal D_s)$, symmetry among the alternatives, and~\eqref{eq:comb-domination-bound} give
\begin{align*}
\P(|S_{m,n}|=s)&{}=\binom ns\P(S_{m,n}=[s])\\
&{}\le\binom ns\left[\zeta_\ell(\alpha)
    +2(\ln4+2m)n^{-1/(m+1)}\right]^n.
\end{align*}
This completes the proof of Theorem~\ref{thm:upper bound s=theta(n)}.

\subsection{Proof of Theorem~\ref{thm:lower bound s=theta(n)}}
We use the following elementary bounds, proved in Section~\ref{sec:lemmas-2}.
\begin{lemma}\label{lem:bounding zeta}
For all $\alpha\in[0,1]$,
\[
4^{-m}\le\zeta_\ell(\alpha)\le1.
\]
\end{lemma}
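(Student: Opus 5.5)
The upper bound is immediate. For every $p\in(0,1)$ and every $r\in[\ell-1]_0$, the quantities $F_{m,p}(r)$ and $F_{m,1-p}(\ell-1-r)$ are probabilities, so they lie in $[0,1]$. Hence each product $F_{m,p}(r)^\alpha F_{m,1-p}(\ell-1-r)^{1-\alpha}$ is at most $1$, and so is the supremum. The convention $0^0=1$ at $\alpha\in\{0,1\}$ does not affect this.

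For the lower bound, I will exhibit one admissible choice in the supremum \eqref{eq:zeta}. Take $p=1/2$ and $r=\lfloor(\ell-1)/2\rfloor$, so that $r\ge0$ and $\ell-1-r\ge0$. Since $F_{m,p}$ is a nondecreasing cumulative sum, each factor is bounded below by its first term:
\[
F_{m,1/2}(r)\ge F_{m,1/2}(0)=2^{-m},\qquad F_{m,1/2}(\ell-1-r)\ge 2^{-m}.
\]
Therefore, for every $\alpha\in[0,1]$,
\[
\zeta_\ell(\alpha)\ge (2^{-m})^\alpha(2^{-m})^{1-\alpha}=2^{-m}\ge4^{-m}.
\]
The same estimate works with any $r\in[\ell-1]_0$, for example $r=0$.

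The only point needing care is the endpoints $\alpha\in\{0,1\}$, where one factor is raised to the power $0$. Since both bases are positive (at least $2^{-m}$), no $0^0$ ambiguity arises there. I expect no real obstacle. The stated constant $4^{-m}$ is weaker than the bound $2^{-m}$ obtained here, presumably to leave slack for how the lemma is used later.
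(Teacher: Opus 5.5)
Your proof is correct and follows the paper's argument. The upper bound holds because both binomial tails lie in $(0,1)$, and the lower bound comes from taking $p=1/2$ and bounding each tail below by $2^{-m}$. The only difference is that the paper first drops the exponents via $x^\alpha y^{1-\alpha}\ge xy$ for $x,y\in(0,1]$, which is where its weaker constant $4^{-m}$ comes from; you keep the exponents and get $2^{-m}$ directly.
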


Fix $n/\sqrt{\ln n}\le s\le n-1$, and write $\alpha=s/n$ and $\varepsilon_n=n^{-1/(m+1)}$. 
Take $n$ sufficiently large, depending only on $\ell$, such that $\varepsilon_n\le4^{-m}/2$. 
We first place the alternatives in two score regions that ensure $[s]$ is dominating. 
Conditional on these placements, we obtain a majority cycle on $[s]$ by covering its score region with rectangles whose induced tournaments share a common alternative.

For $p\in(0,1)$ and $r\in[\ell-1]_0$, define the deterministic regions
\[
\begin{aligned}
U
&{}=\{\mathbf u\in[0,1)^m:|\{j\in[m]:u_j<1-p\}|\ge m-r\},\\
V
&{}=\{\mathbf v\in[0,1)^m:|\{j\in[m]:v_j>1-p\}|\ge\ell+r\}.
\end{aligned}
\]
Every $\mathbf u\in U$ and $\mathbf v\in V$ satisfy $u_j<v_j$ in at least $(m-r)+(\ell+r)-m=\ell$ coordinates. 
Their volumes are binomial tails, so the definition~\eqref{eq:zeta} allows us to choose $p,r$ such that
\begin{equation}\label{eq:comb-near-optimum}
\begin{aligned}
\mu_m(U)^\alpha\mu_m(V)^{1-\alpha}
&{}=F_{m,p}(r)^\alpha F_{m,1-p}(\ell-1-r)^{1-\alpha}\\
&{}\ge\zeta_\ell(\alpha)-\varepsilon_n.
\end{aligned}
\end{equation}
Let $\mathcal T_s$ be the event that $\mathbf X_i\in U$ for every $i\in[s]$ and $\mathbf X_i\in V$ for every $i\in[n]\setminus[s]$. 
Then $\mathcal T_s\subseteq\mathcal D_s$, and independence gives
\begin{equation}\label{eq:comb-allocation-probability}
\P(\mathcal T_s)=\mu_m(U)^s\mu_m(V)^{n-s}
\ge\bigl(\zeta_\ell(\alpha)-\varepsilon_n\bigr)^n>0.
\end{equation}
Conditional on $\mathcal T_s$, the vectors $(\mathbf X_i)_{i\in[s]}$ are independent and uniform on $U$. 
We will show that they form a majority cycle with conditional probability at least $1/2$.

For each $M\subseteq[m]$ with $|M|=m-r$, define the rectangle
\[
U_M=\{\mathbf u\in[0,1)^m:u_j<1-p\text{ for every }j\in M\}.
\]
These rectangles cover $U$ and all contain the cube $[0,1-p)^m$. 
Let $J_M$ be the random set of alternatives in $[s]$ whose score vectors lie in $U_M$:
\[
J_M=\{i\in[s]:\mathbf X_i\in U_M\}.
\]
Let $\mathcal C_M$ be the event that the majority digraph induced by $J_M$ has a Hamiltonian cycle, and let $\mathcal I_s$ be the event that for some $i_0\in[s]$, $\mathbf X_{i_0}$, lies in $[0,1-p)^m$. On $\mathcal T_s$, the sets $J_M$ cover $[s]$. 
If $\mathcal I_s$ and every $\mathcal C_M$ occur, all these cycles contain $i_0$ and every alternative can therefore reach $i_0$ and be reached from $i_0$ along directed paths. 
The tournament on $[s]$ is strongly connected, so Camion's theorem gives $\mathcal C_s$. 
Thus the union bound yields
\begin{equation}\label{eq:comb-cycle-failure}
\P(\mathcal C_s^c\mid\mathcal T_s)
\le\P(\mathcal I_s^c\mid\mathcal T_s)
+\sum_{\substack{M\subseteq[m]\\|M|=m-r}}
\P(\mathcal C_M^c\mid\mathcal T_s).
\end{equation}

Conditional on $\mathcal T_s$,
\[
|J_M|\sim\operatorname{Bin}\!\left(s,\frac{\mu_m(U_M)}{\mu_m(U)}\right).
\]
We first bound the failure probability for each rectangle. 
There are at most $2^m$ rectangles, all of equal volume, and they cover $U$. 
Hence $\mu_m(U_M)/\mu_m(U)\ge2^{-m}$. 
Hoeffding's inequality gives
\begin{equation}\label{eq:comb-rectangle-size}
\P\left(|J_M|<\frac{s}{2^{m+1}}\,\middle|\,\mathcal T_s\right)
\le\exp\left(-\frac{s}{2^{2m+1}}\right).
\end{equation}
For any fixed set $I\subseteq[s]$ with $\P(\mathcal T_s\cap\{J_M=I\})>0$, conditioning on $\mathcal T_s$ and $J_M=I$ makes the vectors $(\mathbf X_i)_{i\in I}$ independent and uniform on $U_M$. 
Rescaling each coordinate interval of this rectangle to $[0,1)$ preserves all rankings, so the induced voting profile on $I$ has the impartial-culture distribution. 
For $|I|\ge s/2^{m+1}$, Theorem~\ref{thm:hamiltonian cycle} therefore gives failure probability at most $C_\ell|I|^{-(\ell-1)/\ell}$. 
Averaging over $I$ and using~\eqref{eq:comb-rectangle-size}, we obtain,
\begin{equation}\label{eq:comb-rectangle-failure}
\begin{aligned}
\P(\mathcal C_M^c\mid\mathcal T_s)
&{}\le\exp\left(-\frac{s}{2^{2m+1}}\right)
+C_\ell\left(\frac{s}{2^{m+1}}\right)^{-(\ell-1)/\ell}\\
&{}\le C_\ell s^{-(\ell-1)/\ell},
\end{aligned}
\end{equation}
where the $C_\ell$ in the second line is increased by a constant factor. 

It remains to find a common alternative. Since $\mu_m(V)\le1$, \eqref{eq:comb-near-optimum} and Lemma~\ref{lem:bounding zeta} imply
\begin{equation}\label{eq:comb-region-mass}
\mu_m(U)
\ge\bigl(\zeta_\ell(\alpha)-\varepsilon_n\bigr)^{1/\alpha}
\ge\left(\frac{4^{-m}}2\right)^{\sqrt{\ln n}}.
\end{equation}
Here we used $1/\alpha=n/s\le\sqrt{\ln n}$. 
Also, $m-r\ge\ell$ gives
\[
\mu_m(U)=\sum_{i=0}^r\binom mi p^i(1-p)^{m-i}
\le2^m(1-p)^\ell.
\]
As $m<2\ell$, it follows that
\[
\frac{(1-p)^m}{\mu_m(U)}
\ge\frac{(1-p)^{2\ell}}{\mu_m(U)}
\ge4^{-m}\mu_m(U).
\]
Combining this with~\eqref{eq:comb-region-mass} and $s\ge n/\sqrt{\ln n}$ gives, for sufficiently large $n$ depending only on $\ell$,
\[
\frac{s(1-p)^m}{\mu_m(U)}
\ge\frac{n}{4^m\sqrt{\ln n}}
\left(\frac{4^{-m}}2\right)^{\sqrt{\ln n}}
\ge\sqrt n.
\]
The last inequality follows by taking logarithms: the middle expression has logarithm $\ln n-O_\ell(\sqrt{\ln n}+\ln\ln n)$.
Conditional independence now gives
\begin{equation}\label{eq:comb-common-box-error}
\begin{aligned}
\P(\mathcal I_s^c\mid\mathcal T_s)
&{}=\left(1-\frac{(1-p)^m}{\mu_m(U)}\right)^s\\
&{}\le\exp\left(-\frac{s(1-p)^m}{\mu_m(U)}\right)
\le e^{-\sqrt n}.
\end{aligned}
\end{equation}

Substituting~\eqref{eq:comb-rectangle-failure} and~\eqref{eq:comb-common-box-error} into~\eqref{eq:comb-cycle-failure}, and using the bound $2^m$ on the number of rectangles, yields
\[
\P(\mathcal C_s^c\mid\mathcal T_s)
\le e^{-\sqrt n}+C_\ell s^{-(\ell-1)/\ell}
\le\frac12
\]
for all sufficiently large $n$, uniformly in the stated range of $s$. On $\mathcal T_s\cap\mathcal C_s$, the set $[s]$ is the Smith set. Exchangeability and~\eqref{eq:comb-allocation-probability} therefore give
\[
\P(|S_{m,n}|=s)
\ge\binom ns\P(\mathcal T_s)\P(\mathcal C_s\mid\mathcal T_s)
\ge\frac12\binom ns\bigl(\zeta_\ell(\alpha)-\varepsilon_n\bigr)^n.
\]
Taking $n$th roots and using $2^{-1/n}\ge1-(\ln2)/n$ and $0<\zeta_\ell(\alpha)-\varepsilon_n\le1$, we conclude that
\[
\begin{aligned}
\left(\frac{\P(|S_{m,n}|=s)}{\binom ns}\right)^{1/n}
&{}\ge\bigl(\zeta_\ell(\alpha)-\varepsilon_n\bigr)2^{-1/n}\\
&{}\ge\zeta_\ell(\alpha)-\varepsilon_n-\frac{\ln2}{n}.
\end{aligned}
\]
Recall $\varepsilon_n=n^{-1/(m+1)}$. Since $n^{-1}=o(\varepsilon_n)$, this is the required lower bound. 
The matching upper bound follows from Theorem~\ref{thm:upper bound s=theta(n)}.

\section{The Probability of a Hamiltonian Cycle}\label{sec:hamiltonian cycle}
We prove Theorem~\ref{thm:hamiltonian cycle} using the analytic and combinatorial upper bounds, Theorems~\ref{thm:upper bound s=o(n)} and~\ref{thm:upper bound s=theta(n)} respectively. 
Throughout this section, write
\begin{equation}\label{eq:hamiltonian-cutoffs}
\beta=\frac{\ell-1}{\ell},\qquad
\delta=(2\lambda_\ell)^{-1/\beta}.
\end{equation}
The constant $\delta$ satisfies the following inequality, which follows directly from the definition of $\lambda_\ell$ in~\eqref{eq:constants}.
\begin{lemma}\label{lem:smallConstant}
\[
0<\delta=(2\lambda_\ell)^{-\ell/(\ell-1)}<\frac12.
\]
\end{lemma}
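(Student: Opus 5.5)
Since $\ell\ge2$ is fixed, $\beta=(\ell-1)/\ell\in[1/2,1)$ is a positive constant, so $\delta=(2\lambda_\ell)^{-1/\beta}=(2\lambda_\ell)^{-\ell/(\ell-1)}$ is a well-defined positive real number once we know $\lambda_\ell>0$. Positivity of $\lambda_\ell$ is immediate from \eqref{eq:constants}: the factors $e^{(3\ell-1)/\ell}$ and $\binom m\ell^{m/\ell}$ are positive, and $c_\ell$ is the integral of the strictly positive function $e^{-\sigma_\ell(\mathbf w)}$ over $[0,\infty)^m$, which has positive measure. Finiteness of $c_\ell$ is \cite[Lemma~3]{sauermann2022probability}. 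Thus $0<\delta$.

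For the upper bound $\delta<1/2$, note that $x\mapsto x^{-1/\beta}$ is decreasing on $(0,\infty)$. It therefore suffices to show $2\lambda_\ell>2^{\beta}$, or more simply $\lambda_\ell>1/2$. Since $\beta<1$, $2\lambda_\ell>2$ would give $\delta<2^{-1/\beta}\le 1/2$.

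The plan is to show $\lambda_\ell>1$. Because $m\ge\ell$, we have $\binom m\ell\ge1$, so $\binom m\ell^{m/\ell}\ge1$. Because $(3\ell-1)/\ell\ge5/2$ for $\ell\ge2$, we have $e^{(3\ell-1)/\ell}\ge e^{5/2}>12$. It remains to bound $c_\ell$ from below by an explicit positive constant, say $c_\ell\ge e^{-1}\cdot\mu_m\bigl([0,a]^m\bigr)$ for a suitable $a$. On the cube $[0,a]^m$ we have $\sigma_\ell(\mathbf w)\le\binom m\ell a^\ell$. Choosing $a=\binom m\ell^{-1/\ell}$ gives $e^{-\sigma_\ell}\ge e^{-1}$ there, hence
\[
c_\ell\ge e^{-1}\binom m\ell^{-m/\ell}.
\]
Substituting this into the formula for $\lambda_\ell$, the binomial powers cancel exactly, leaving
\[
\lambda_\ell\ge e^{(3\ell-1)/\ell-1}=e^{(2\ell-1)/\ell}\ge e^{3/2}>1.
\]
Then $2\lambda_\ell>2$, and hence $\delta=(2\lambda_\ell)^{-1/\beta}<2^{-1/\beta}<1/2$, since $1/\beta>1$.

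The only nontrivial step is the lower bound on $c_\ell$. The crude bound obtained by restricting to a small cube suffices precisely because the normalization in $\lambda_\ell$ contains the factor $\binom m\ell^{m/\ell}$, which cancels the cube's volume. Everything else is monotonicity of powers.
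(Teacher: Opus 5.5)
Your proof is correct and is essentially the paper's argument: you restrict the integral defining $c_\ell$ to the cube $[0,\binom m\ell^{-1/\ell}]^m$ so that its volume cancels the $\binom m\ell^{m/\ell}$ factor, which gives $\lambda_\ell\ge e^{(2\ell-1)/\ell}>1$ and hence $\delta<2^{-\ell/(\ell-1)}<1/2$. One harmless slip: your aside that $\lambda_\ell>1/2$ would suffice is false, since you actually need $\lambda_\ell>2^{\beta-1}\ge 2^{-1/2}$, but you never use it because you prove $\lambda_\ell>1$.
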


The following lemma makes Theorem~\ref{thm:upper bound s=o(n)} uniform in the small-cardinality regime.
\begin{lemma}\label{lem:simplified upper bound s=o(n)}
There are constants $K_\ell,N_\ell>0$ such that, for every integer $n\ge N_\ell$ and every integer $3\le s\le\delta n$,
\[
\max\bigl\{\P(|S_{m,n}|=s),\P(|S_{m,n}|=n-s)\bigr\}
\le K_\ell n^{-m/\ell}.
\]
\end{lemma}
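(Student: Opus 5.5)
The plan is to start from Theorem~\ref{thm:upper bound s=o(n)}, which already bounds both quantities by $f(s):=\lambda_\ell^s(s/n)^{s\beta}$ with $\beta=(\ell-1)/\ell$. It therefore suffices to show that $f(s)\le K_\ell n^{-m/\ell}$ for every integer $3\le s\le\delta n$. Note that $f(3)=\lambda_\ell^3 3^{3\beta}n^{-3\beta}$ and $3\beta=3(\ell-1)/\ell\ge m/\ell=(2\ell-1)/\ell$, since $3\ell-3\ge2\ell-1$ exactly when $\ell\ge2$. So the smallest case $s=3$ is already of the required order, and the goal is to show that $f$ does not grow much beyond $f(3)$ on $[3,\delta n]$.

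I would write $f(s)=\bigl(\lambda_\ell (s/n)^\beta\bigr)^s$. On the range $s\le\delta n$ we have $(s/n)^\beta\le\delta^\beta=(2\lambda_\ell)^{-1}$ by Lemma~\ref{lem:smallConstant}, so the base satisfies $\lambda_\ell(s/n)^\beta\le1/2$. This forces $f(s)\le2^{-s}$. That alone is too weak for moderate $s$, so I would split the range at a threshold such as $s_0=\lceil C\log n\rceil$.

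\textbf{Large $s$ (with $s_0\le s\le\delta n$).} Here $f(s)\le2^{-s}\le2^{-C\log_2 n}=n^{-C}$. Choosing $C\ge m/\ell$ (for instance $C=2$ with $\log_2$) gives $f(s)\le n^{-m/\ell}$.

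\textbf{Small $s$ (with $3\le s\le s_0$).} Here I bound $f(s)=\lambda_\ell^s s^{s\beta}n^{-s\beta}$. Since $s\beta\ge3\beta\ge m/\ell$, I factor this as
\[
f(s)=n^{-m/\ell}\cdot\lambda_\ell^s s^{s\beta}n^{-(s\beta-m/\ell)}.
\]
For $s=3$ the second factor is a constant. For $s\ge4$, the exponent satisfies $s\beta-m/\ell\ge s\beta/4$, since $m/\ell<2\le\tfrac{3}{4}s\beta$ whenever $s\beta\ge 4\cdot\tfrac12=2$ (recall $\beta\ge1/2$). Hence the second factor is at most $\bigl(\lambda_\ell s^\beta n^{-\beta/4}\bigr)^s$. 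Because $s\le C\log n$, the quantity $\lambda_\ell s^\beta n^{-\beta/4}\le1$ once $n\ge N_\ell$. Then $f(s)\le n^{-m/\ell}$ for these $s$.

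Taking $K_\ell=\max\{1,\lambda_\ell^3 3^{3\beta}\}$ finishes the argument. The statement about $n-s$ comes from the same bound, since Theorem~\ref{thm:upper bound s=o(n)} covers both and $\delta<1/2$ keeps $s\le n/2$.

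The main obstacle is the bookkeeping in the small-$s$ range: the factor $s^{s\beta}$ must be beaten by the surplus power of $n$ uniformly in $s$. I expect the logarithmic cutoff to handle this cleanly.
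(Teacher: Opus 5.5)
Your proof is correct and follows the paper's argument. Both use the same three-way split: the case $s=3$; a small range where the surplus power of $n$ beats $\lambda_\ell^s s^{s\beta}$; and a large range where the choice of $\delta$ gives $\lambda_\ell(s/n)^\beta\le\tfrac12$ and hence $B_s\le2^{-s}$. Your cutoff $\lceil 2\log_2 n\rceil$ plays the role of the paper's $n^{1/4}$.

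One small slip: the chain $m/\ell<2\le\tfrac34 s\beta$ fails at $s=4$, $\ell=2$, where $\tfrac34 s\beta=\tfrac32$. The inequality you actually need, $m/\ell\le\tfrac34 s\beta$, still holds for $s\ge4$, because $\tfrac34 s\beta\ge3\beta\ge m/\ell$.
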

The remaining cardinalities are controlled by the next lemma. Define
\begin{equation}\label{eq:central-constants}
\epsilon_\ell=\min\left\{\frac1{\ell\binom m\ell},
  \frac{\delta(1-\delta)^{(1-\delta)/\delta}}2\right\},
\qquad
\rho_\ell=\max\{1-\epsilon_\ell^m,2^{-\delta}\}.
\end{equation}
Here $0<\epsilon_\ell<1/4$ and $0<\rho_\ell<1$, because $0<\delta<1/2$.
\begin{lemma}\label{lem:exponential_bound}
For every $\alpha\in[\delta,1-\delta]$,
\[
\zeta_\ell(\alpha)
\le\rho_\ell\alpha^\alpha(1-\alpha)^{1-\alpha}.
\]
Consequently, for every integer $n\ge1$, every integer $s$ with $\delta n\le s\le(1-\delta)n$, and every real $t\ge0$,
\[
\binom ns\bigl(\zeta_\ell(s/n)+t\bigr)^n
\le(\rho_\ell+2t)^n.
\]
\end{lemma}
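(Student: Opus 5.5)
The plan is to rewrite the two binomial tails in \eqref{eq:zeta} in terms of a single random variable and then apply weighted AM--GM. Fix $p\in(0,1)$, $r\in[\ell-1]_0$, and let $B\sim\operatorname{Bin}(m,p)$. Put $a=F_{m,p}(r)=\P(B\le r)$. Since $m-B\sim\operatorname{Bin}(m,1-p)$,
\[
b:=F_{m,1-p}(\ell-1-r)=\P(m-B\le \ell-1-r)=\P(B\ge \ell+r).
\]
So $a+b=1-\P(r<B<\ell+r)$, and because $\ell\ge2$ the gap contains the value $r+1$. Weighted AM--GM (the maximum of $a^\alpha b^{1-\alpha}$ subject to $a+b=c$ equals $c\,\alpha^\alpha(1-\alpha)^{1-\alpha}$) gives
\[
a^\alpha b^{1-\alpha}\le (a+b)\,\alpha^\alpha(1-\alpha)^{1-\alpha}.
\]
I then split according to the value of $p$.

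\emph{Middle case, $\epsilon_\ell\le p\le 1-\epsilon_\ell$.} Here
\[
\P(B=r+1)=\binom m{r+1}p^{r+1}(1-p)^{m-r-1}\ge\epsilon_\ell^m,
\]
so $a+b\le 1-\epsilon_\ell^m\le\rho_\ell$, and the AM--GM bound finishes this case.

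\emph{Case $p<\epsilon_\ell$.} Use the crude bound $a\le1$. We have
\[
b=\P(B\ge\ell+r)\le\binom m\ell p^\ell\le\binom m\ell\epsilon_\ell^\ell\le \epsilon_\ell^{\ell-1}/\ell\le\epsilon_\ell,
\]
using $\epsilon_\ell\le 1/(\ell\binom m\ell)$. So $a^\alpha b^{1-\alpha}\le\epsilon_\ell^{1-\alpha}$, and it suffices to show
\[
\epsilon_\ell^{1-\alpha}\le 2^{-\delta}\alpha^\alpha(1-\alpha)^{1-\alpha}\le\rho_\ell\,\alpha^\alpha(1-\alpha)^{1-\alpha}.
\]
Equivalently, $\epsilon_\ell\le f(\alpha):=2^{-\delta/(1-\alpha)}\alpha^{\alpha/(1-\alpha)}(1-\alpha)$ for all $\alpha\in[\delta,1-\delta]$. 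I expect $f$ to be decreasing on this interval, so its minimum is at $\alpha=1-\delta$, where
\[
f(1-\delta)=\tfrac12\,\delta(1-\delta)^{(1-\delta)/\delta},
\]
which is exactly the second term in the definition of $\epsilon_\ell$. Verifying this monotonicity, for instance by differentiating $\ln f$, is the main technical point. If it fails, I would instead lower-bound each of the three factors of $f$ separately at its worst endpoint, and check that the constant in \eqref{eq:central-constants} still suffices.

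\emph{Case $p>1-\epsilon_\ell$.} This is symmetric. Since $m-r\ge\ell$,
\[
a=\P(m-B\ge m-r)\le\binom m\ell(1-p)^\ell\le\epsilon_\ell,
\]
and with $b\le1$ we need $\epsilon_\ell\le 2^{-\delta/\alpha}(1-\alpha)^{(1-\alpha)/\alpha}\alpha$. This is the previous condition under $\alpha\leftrightarrow1-\alpha$, so its worst case is $\alpha=\delta$ and it gives the same constant. Taking the supremum over $p$ and $r$ in the three cases yields the first claim.

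For the consequence, use the standard entropy bound $\binom ns\le(\alpha^\alpha(1-\alpha)^{1-\alpha})^{-n}$ with $\alpha=s/n$. Combined with the first claim this gives
\[
\binom ns(\zeta_\ell(\alpha)+t)^n\le\bigl(\rho_\ell+t\,\alpha^{-\alpha}(1-\alpha)^{-(1-\alpha)}\bigr)^n\le(\rho_\ell+2t)^n,
\]
where the last step uses $\alpha^\alpha(1-\alpha)^{1-\alpha}\ge\tfrac12$, since the binary entropy is at most $\ln2$.
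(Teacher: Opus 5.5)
Your proposal is correct and follows essentially the paper's proof. Both use the same three-way split of $p$ at $\epsilon_\ell$ and $1-\epsilon_\ell$, the $i=r+1$ gap term with weighted AM--GM in the middle range, the bounds $a\le 1$ and $b\le\epsilon_\ell$ (or the reverse) near the ends, and the method-of-types estimate with $e^{h(\alpha)}\le 2$ for the consequence. The monotonicity you left open does hold: your $f$ is the product of the positive decreasing factors $2^{-\delta/(1-\alpha)}$ and $(1-\alpha)\alpha^{\alpha/(1-\alpha)}$, and the latter has logarithmic derivative $\ln\alpha/(1-\alpha)^2<0$, which is the same fact the paper uses.
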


The proofs of the above lemmas are provided in Section~\ref{sec:lemmas-3}. 
Theorem~\ref{thm:upper bound s=theta(n)} states that there exist $C_\ell>0$ and $N_\ell>0$ such that, for all $n\geq N_\ell$ and integers $3\le s\le n-1$,
\begin{equation}\label{eq:hamiltonian-combinatorial-upper}
\P(|S_{m,n}|=s)
\le\binom ns\bigl(\zeta_\ell(s/n)+C_\ell n^{-1/(m+1)}\bigr)^n.
\end{equation}
Increase $N_\ell$ so that $C_\ell n^{-1/(m+1)}\le(1-\rho_\ell)/4$. 
For every integer $s$ with $\delta n\le s\le(1-\delta)n$, applying Lemma~\ref{lem:exponential_bound} to~\eqref{eq:hamiltonian-combinatorial-upper} gives
\begin{equation}\label{eq:hamiltonian-central-range}
\P(|S_{m,n}|=s)
\le\left(\frac{1+\rho_\ell}{2}\right)^n
\le n^{-m/\ell},
\end{equation}
where the last inequality holds for all sufficiently large $n$ depending only on $\ell$. 
Every integer $3\le s\le n-3$ either satisfies $s\le\delta n$, satisfies $n-s\le\delta n$, or lies in the central interval. 
Thus, combining Lemma~\ref{lem:simplified upper bound s=o(n)} with~\eqref{eq:hamiltonian-central-range} and increasing $K_\ell$ and $N_\ell$ if necessary yields
\begin{equation}\label{eq:hamiltonian-interior-cardinalities}
\P(|S_{m,n}|=s)\le K_\ell n^{-m/\ell}
\qquad(3\le s\le n-3,\ n\ge N_\ell).
\end{equation}
Summing~\eqref{eq:hamiltonian-interior-cardinalities} for $n\ge\max\{N_\ell,6\}$ gives
\begin{equation}\label{eq:hamiltonian-interior-sum}
\sum_{s=3}^{n-3}\P(|S_{m,n}|=s)
\le(n-5)K_\ell n^{-m/\ell}
\le K_\ell n^{-\beta}.
\end{equation}
Recall that the Smith set cannot have cardinality two. 
Applying Theorem~\ref{thm:upper bound s=o(n)} with the smaller cardinalities $1$ and $2$ gives
\begin{equation}\label{eq:hamiltonian-boundary-cardinalities}
\begin{aligned}
\P(|S_{m,n}|=1),\ \P(|S_{m,n}|=n-1)
  &\le\lambda_\ell n^{-\beta},\\
\P(|S_{m,n}|=n-2)
  &\le 2^{2\beta}\lambda_\ell^2 n^{-2\beta}.
\end{aligned}
\end{equation}
For $n\ge3$, Hamiltonicity is equivalent to $|S_{m,n}|=n$. 
Combining~\eqref{eq:hamiltonian-interior-sum} and~\eqref{eq:hamiltonian-boundary-cardinalities}, the failure probability satisfies
\begin{align}
1-\P(|S_{m,n}|=n)&{}=\sum_{s=1}^{n-1}\P(|S_{m,n}|=s)\notag\\
&{}\le\bigl(2\lambda_\ell+2^{2\beta}\lambda_\ell^2+K_\ell\bigr)n^{-\beta}.
\label{eq:hamiltonian-failure-upper}
\end{align}
For the reverse bound, \cite[Theorem~1]{sauermann2022probability} gives the more precise asymptotic
\begin{equation}\label{eq:hamiltonian-condorcet-asymptotic}
\P(|S_{m,n}|=1)
=c_\ell n^{-\beta}
 +O_\ell\left(\frac{(\ln n)^{1/\ell}}n\right),
\end{equation}
where $c_\ell>0$ is the integral defined in~\eqref{eq:constants}. The error term in~\eqref{eq:hamiltonian-condorcet-asymptotic} divided by $n^{-\beta}$ tends to zero, since $(\ln n/n)^{1/\ell}\to0$. Therefore, for all sufficiently large $n$ depending only on $\ell$,
\begin{equation}\label{eq:hamiltonian-failure-lower}
1-\P(|S_{m,n}|=n)
\ge\P(|S_{m,n}|=1)
\ge\frac{c_\ell}{2}n^{-\beta}.
\end{equation}
The bounds~\eqref{eq:hamiltonian-failure-upper} and~\eqref{eq:hamiltonian-failure-lower} prove Theorem~\ref{thm:hamiltonian cycle}.

\section{Proofs of Technical Lemmas and Corollaries}\label{sec:lemmas}

\subsection{Proofs of Lemma~\ref{lemma:6} and Corollary~\ref{cor:Bounding n-s}}\label{sec:lemmas-1}

\begin{proof}[Proof of Lemma~\ref{lemma:6}]
Write $\mathbf z_i=(z_{i1},\ldots,z_{im})$. Using the definitions of $P_s$ in \eqref{eq:analytic-polynomial} and $\sigma_\ell$ in \eqref{eq:sigma} gives
\[
    P_s(\mathbf z)
    =\sum_{i=1}^s\sum_{\substack{M\subseteq[m]\\|M|=\ell}}
        \prod_{j\in M}z_{ij}
    =\sum_{i=1}^s\sigma_\ell(\mathbf z_i).
\]
Hence,
\begin{equation*}
\int_{[0,\infty)^{ms}} e^{-P_s(\mathbf z)}\,\d\mathbf z{}=\prod_{i=1}^s
        \int_{[0,\infty)^m}e^{-\sigma_\ell(\mathbf z_i)}\,\d\mathbf z_i{}=c_\ell^s.
\end{equation*}
The first equality follows from Tonelli's theorem and the second equality uses the definition of $c_\ell$ in \eqref{eq:constants}. 
\end{proof}

\begin{proof}[Proof of Corollary~\ref{cor:Bounding n-s}]
For $1\le s\le n/2$, \eqref{eq:analytic-domination-bound} bounds:
\[
    \binom ns\P(\mathcal D_s)
    \le\lambda_\ell^s\left(\frac{s}{n}\right)^{s(\ell-1)/\ell}.
\]
Thus, by Lemma~\ref{lem:s to n-s},
\[
    \lambda_\ell^s\left(\frac{s}{n}\right)^{s(\ell-1)/\ell}
    \ge\binom ns\P(\mathcal D_s)
    =\binom ns\P(\mathcal D_{n-s})
    =\binom n{n-s}\P(\mathcal D_{n-s})
    \ge\P(|S_{m,n}|=n-s).
\]
\end{proof}

\subsection{Proofs of Lemmas~\ref{lem:upperbound_set_size} and~\ref{lem:bounding zeta}}\label{sec:lemmas-2}

\begin{proof}[Proof of Lemma~\ref{lem:upperbound_set_size}]
If either set is empty, the claim is immediate. 
Assume both are nonempty.
Set $\widetilde V=V+(1,\ldots,1)$.
Both $U$ and $\widetilde V$ lie in $[L]_0^m$, and $|\widetilde V|=|V|$. 
For every $\mathbf u\in U$ and $\mathbf v\in V$, the hypothesis gives $u_j<v_j+1$ in at least $\ell$ coordinates. 
Thus the Hamming distance, which counts the coordinates in which two vectors differ, is at least $\ell$ between any point of $U$ and any point of $\widetilde V$. 
Let $N_{\ell-1}(U)$ be the set of points in $[L]_0^m$ at Hamming distance at most $\ell-1$ from some point of $U$. Then
\begin{equation}\label{eq:comb-separated-sets}
    \widetilde V\cap N_{\ell-1}(U)=\varnothing,
    \qquad
    |V|\le (L+1)^m-|N_{\ell-1}(U)|.
\end{equation}
To bound this neighborhood, for each integer $r\in[\ell-1]_0$ let $p_r\in(0,1)$ be the unique solution of
\begin{equation}\label{eq:comb-matched-tail}
    F_{m,p_r}(r)=\frac{|U|}{(L+1)^m}.
\end{equation}
This solution exists because $0<|U|<(L+1)^m$ and $F_{m,p}(r)$ is continuous and strictly decreasing in $p$, from $1$ to $0$. 
Harper's isoperimetric theorem~(\cite[Theorem~3]{HARPER1999285}; see also \cite[Lemma~3]{dai2023fundamental}) implies
\begin{equation}\label{eq:comb-hamming-neighborhood}
    \frac{|N_{\ell-1}(U)|}{(L+1)^m}
    \ge\min_{r\in[\ell-1]_0}F_{m,p_r}(r+\ell-1).
\end{equation}
Choose an $r$ attaining the minimum in~\eqref{eq:comb-hamming-neighborhood}. 
Combining this bound with~\eqref{eq:comb-separated-sets} and complementing the binomial tail yields
\begin{align}
\frac{|V|}{(L+1)^m}
&{}\le 1-F_{m,p_r}(r+\ell-1)\notag\\
&{}=F_{m,1-p_r}(\ell-1-r).
\label{eq:comb-complement-bound}
\end{align}
Finally, \eqref{eq:comb-matched-tail}, \eqref{eq:comb-complement-bound}, and the definition~\eqref{eq:zeta} give
\begin{align*}
\frac{|U|^\alpha|V|^{1-\alpha}}{(L+1)^m}
&{}\le F_{m,p_r}(r)^\alpha
    F_{m,1-p_r}(\ell-1-r)^{1-\alpha}\\
&{}\le\zeta_\ell(\alpha),
\end{align*}
which proves the claim.
\end{proof}

\begin{proof}[Proof of Lemma~\ref{lem:bounding zeta}]
For $p\in(0,1)$ and $r\in[\ell-1]_0$, both binomial tails in~\eqref{eq:zeta} lie in $(0,1)$. Thus, for every $\alpha\in[0,1]$, we have $\zeta_\ell(\alpha)\le1$ and
\[
    \zeta_\ell(\alpha)
    \ge\sup_{\substack{p\in(0,1)\\r\in[\ell-1]_0}}
        F_{m,p}(r)\,F_{m,1-p}(\ell-1-r).
\]
Taking $p=\tfrac12$ in~\eqref{eq:binomial-tail}, we have, for every integer $t\in[\ell-1]_0$,
\[
    F_{m,1/2}(t)=2^{-m}\sum_{i=0}^t\binom mi\ge2^{-m}.
\]
Hence
\[
    \zeta_\ell(\alpha)\ge2^{-m}\cdot2^{-m}=4^{-m}.
\]
At the endpoints, taking $r=0$ and letting $p\to1$ or $p\to0$, respectively, also gives $\zeta_\ell(0)=\zeta_\ell(1)=1$.
\end{proof}

\subsection{Proofs of Lemmas~\ref{lem:smallConstant},~\ref{lem:simplified upper bound s=o(n)}, and~\ref{lem:exponential_bound}}

\begin{proof}[Proof of Lemma~\ref{lem:smallConstant}]
Recall the formula for $\lambda_\ell$ in~\eqref{eq:constants}. 
Let $t=\binom m\ell^{-1/\ell}$. The definition~\eqref{eq:sigma} gives, on $[0,t]^m$,
\[
    \sigma_\ell(\mathbf w)\le\binom m\ell t^\ell=1.
\]
Since $e^{-\sigma_\ell(\mathbf w)}$ is strictly positive,
\[
    c_\ell
    >\int_{[0,t]^m}e^{-\sigma_\ell(\mathbf w)}\,\d\mathbf w
    \ge t^m e^{-1}.
\]
Consequently,
\begin{align*}
\lambda_\ell&{}>e^{(3\ell-1)/\ell}\binom m\ell^{m/\ell}t^m e^{-1}\\
&{}=e^{(2\ell-1)/\ell}>1.
\end{align*}
Therefore,
\[
    (2\lambda_\ell)^{-\ell/(\ell-1)}
    <2^{-\ell/(\ell-1)}<2^{-1}=\frac12.
\]
\end{proof}

\begin{proof}[Proof of Lemma~\ref{lem:simplified upper bound s=o(n)}]\label{sec:lemmas-3}
Recall the definitions of $\beta$ and $\delta$ in~\eqref{eq:hamiltonian-cutoffs}. 
Lemma~\ref{lem:smallConstant}, proved below directly from the defining integral in~\eqref{eq:constants}, gives $\delta<1/2$. 
Thus Theorem~\ref{thm:upper bound s=o(n)} applies to every integer $3\le s\le\delta n$, and bounds both probabilities by
\[
B_s=\lambda_\ell^s\left(\frac sn\right)^{\beta s}.
\]
Since $\ell\ge2$, we have $3\beta\ge m/\ell$. Choose $N_\ell\ge256$ so that, for every $n\ge N_\ell$,
\begin{equation}\label{eq:hamiltonian-large-n-conditions}
\lambda_\ell n^{-3\beta/4}\le1,
\qquad
2^{-n^{1/4}}\le n^{-m/\ell}.
\end{equation}
Such a threshold depends only on $\ell$: the first left-hand side in~\eqref{eq:hamiltonian-large-n-conditions} tends to zero, and $n^{1/4}/\ln n\to\infty$ gives its second inequality. We now consider three exhaustive ranges of integer $s$. If $s=3$, then
\begin{align*}
B_3&{}=\lambda_\ell^3 3^{3\beta}n^{-3\beta}\\
&{}\le(3\lambda_\ell)^3n^{-m/\ell}.
\end{align*}
If $4\le s\le n^{1/4}$, then $s/n\le n^{-3/4}$ and~\eqref{eq:hamiltonian-large-n-conditions} makes the base in the following power at most one. Hence,
\[
\begin{aligned}
B_s
&\le\bigl(\lambda_\ell n^{-3\beta/4}\bigr)^s
 \le\bigl(\lambda_\ell n^{-3\beta/4}\bigr)^4\\
&=\lambda_\ell^4n^{-3\beta}
 \le\lambda_\ell^4n^{-m/\ell}.
\end{aligned}
\]
Finally, if $n^{1/4}<s\le\delta n$, then $\lambda_\ell(s/n)^\beta\le\lambda_\ell\delta^\beta=1/2$ by~\eqref{eq:hamiltonian-cutoffs}. 
Together with~\eqref{eq:hamiltonian-large-n-conditions}, this gives
\[
B_s\le2^{-s}\le2^{-n^{1/4}}\le n^{-m/\ell}.
\]
Thus $K_\ell=\max\{(3\lambda_\ell)^3,\lambda_\ell^4,1\}$ works simultaneously for every integer in the stated range.
\end{proof}

For $0<\alpha<1$, let $h(\alpha)$ be the binary entropy function using natural logarithms, and set $h(0)=h(1)=0$ by continuity. Thus,
\begin{equation}\label{eq:central-entropy}
    \begin{aligned}
    h(\alpha)&=\alpha\ln\frac1\alpha
        +(1-\alpha)\ln\frac1{1-\alpha},\\
    e^{-h(\alpha)}&=\alpha^\alpha(1-\alpha)^{1-\alpha}.
    \end{aligned}
\end{equation}
The product identity in~\eqref{eq:central-entropy} is understood by continuity at the endpoints; equivalently, use $0\ln0=0$ in the entropy formula. 
Recall the following standard bound.
\begin{lemma}[Method of types]\label{lem:method of types}
For integers $n\ge1$ and $0\le s\le n$, with $\alpha=s/n$,
\[
    \binom ns\le e^{nh(\alpha)}.
\]
\end{lemma}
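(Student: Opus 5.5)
The bound is the standard entropy estimate for a binomial coefficient, and I will prove it with the probabilistic argument underlying the method of types.

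The endpoint cases $s=0$ and $s=n$ are immediate: both sides equal $1$, because $h(0)=h(1)=0$ under the convention fixed in \eqref{eq:central-entropy}.

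For $0<s<n$, so that $0<\alpha<1$, expand $1=(\alpha+(1-\alpha))^n$ by the binomial theorem. All terms are nonnegative, so keeping only the term with index $s$ gives
\[
1\ge\binom ns\alpha^s(1-\alpha)^{n-s}.
\]
Since $s=\alpha n$ and $n-s=(1-\alpha)n$, the product on the right equals $\bigl(\alpha^\alpha(1-\alpha)^{1-\alpha}\bigr)^n$. By \eqref{eq:central-entropy}, this is $e^{-nh(\alpha)}$. Rearranging gives $\binom ns\le e^{nh(\alpha)}$.

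No step presents a real obstacle. The only points needing care are the endpoint convention $0^0=1$, which is handled by treating $s\in\{0,n\}$ separately as above, and the identification of the kept binomial term with $e^{-nh(\alpha)}$, which is exactly the product identity recorded in \eqref{eq:central-entropy}.
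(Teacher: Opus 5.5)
Your proof is correct and matches the argument the paper relies on: the paper gives no proof of its own and cites Cover and Thomas, whose upper bound is exactly that the type class has probability at most $1$ under the i.i.d.\ law with parameter $\alpha$, i.e., your step of keeping the index-$s$ term of $(\alpha+(1-\alpha))^n=1$. Treating $s\in\{0,n\}$ separately handles the endpoint convention cleanly.
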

The proof can be found, for example, in \cite[Example 11.1.3]{cover2006information}.
Now we are ready to prove Lemma~\ref{lem:exponential_bound}.

\begin{proof}[Proof of Lemma~\ref{lem:exponential_bound}]
Fix $\alpha\in[\delta,1-\delta]$. 
Abbreviate the two binomial tails involved in \eqref{eq:zeta} by
\[
    a_r(p)=F_{m,p}(r),
    \qquad
    b_r(p)=F_{m,1-p}(\ell-1-r).
\]
Recall from~\eqref{eq:central-constants} that
\[
    \epsilon_\ell
    =\min\left\{\frac1{\ell\binom m\ell},
        \frac{\delta(1-\delta)^{(1-\delta)/\delta}}2\right\},
\]
which satisfies $0<\epsilon_\ell<\delta/2<1/4$. 
For every integer $r\in[\ell-1]_0$, we divide into three cases according to whether $p$ lies in $(0,\epsilon_\ell)$, $[\epsilon_\ell,1-\epsilon_\ell]$, or $(1-\epsilon_\ell,1)$.

We begin with $p\in[\epsilon_\ell,1-\epsilon_\ell]$. Consider
\[
    a_r(p)+b_r(p)
    =1-\sum_{i=r+1}^{r+\ell-1}\binom mi p^i(1-p)^{m-i}.
\]
Because $\ell-1\ge1$, the sum contains its $i=r+1$ term. For $r\in[\ell-1]_0$ and $p\in[\epsilon_\ell,1-\epsilon_\ell]$,
\begin{align}
a_r(p)+b_r(p)&{}\le1-\binom m{r+1}p^{r+1}(1-p)^{m-r-1}\notag\\
&{}\le1-p^{r+1}(1-p)^{m-r-1}{}\le1-\epsilon_\ell^m.
\label{eq:central-middle-gap}
\end{align}
For $x,y\ge0$, weighted AM--GM gives $(x/\alpha)^\alpha(y/(1-\alpha))^{1-\alpha}\le x+y$, or $x^\alpha y^{1-\alpha}\le(x+y)e^{-h(\alpha)}$. 
Applying this to~\eqref{eq:central-middle-gap} yields
\begin{equation}\label{eq:central-middle-product}
    a_r(p)^\alpha b_r(p)^{1-\alpha}
    \le(1-\epsilon_\ell^m)e^{-h(\alpha)}
    \qquad\text{for }p\in[\epsilon_\ell,1-\epsilon_\ell].
\end{equation}

If $p<\epsilon_\ell$, then
\begin{align*}
b_r(p)&{}=\sum_{i=0}^{\ell-1-r}\binom mi(1-p)^i p^{m-i}\\
&{}\le\sum_{i=0}^{\ell-1-r}\binom m\ell p^{m-i}{}\le\ell\binom m\ell p^\ell
    <\epsilon_\ell^\ell\ell\binom m\ell.
\end{align*}
Since $\ell\ge2$ and $0<\epsilon_\ell<1$, we have $\epsilon_\ell^{\ell-1}\le\epsilon_\ell\le1/(\ell\binom m\ell)$. 
Consequently $b_r(p)<\epsilon_\ell$. As $a_r(p)\le1$,
\begin{align}
a_r(p)^\alpha b_r(p)^{1-\alpha}&{}\le\epsilon_\ell^{1-\alpha}{}\le\left(\frac{\delta(1-\delta)^{(1-\delta)/\delta}}2\right)^{1-\alpha}.
\label{eq:central-small-p-product}
\end{align}
To bound the right-hand side of~\eqref{eq:central-small-p-product}, note that $f(\alpha)=(1-\alpha)\alpha^{\alpha/(1-\alpha)}$ is strictly decreasing on $(0,1)$, since $(\ln f)'(\alpha)=\ln\alpha/(1-\alpha)^2<0$. Since $\alpha\le1-\delta$,
\begin{align}
\left(\frac{\delta(1-\delta)^{(1-\delta)/\delta}}2\right)^{1-\alpha}&{}\le\left(\frac{(1-\alpha)\alpha^{\alpha/(1-\alpha)}}2\right)^{1-\alpha}\notag\\
&{}=2^{-(1-\alpha)}e^{-h(\alpha)}{}\le2^{-\delta}e^{-h(\alpha)}.
\label{eq:central-small-p-entropy}
\end{align}

Finally, if $p>1-\epsilon_\ell$, replacing $(p,r,\alpha)$ by $(1-p,\ell-1-r,1-\alpha)$ interchanges $a_r(p)$ and $b_r(p)$, so the preceding argument applies unchanged and gives $a_r(p)^\alpha b_r(p)^{1-\alpha}\le2^{-\delta}e^{-h(\alpha)}$.
Recall $\rho_\ell$ from~\eqref{eq:central-constants}. The bounds~\eqref{eq:central-middle-product} and~\eqref{eq:central-small-p-product}--\eqref{eq:central-small-p-entropy}, together with this symmetry argument, are uniform in $p$, the integer $r$, and $\alpha\in[\delta,1-\delta]$. 
Taking the supremum in~\eqref{eq:zeta} therefore gives
\begin{equation}\label{eq:central-zeta-gap}
    \zeta_\ell(\alpha)
    =\sup_{\substack{p\in(0,1)\\r\in[\ell-1]_0}}
        a_r(p)^\alpha b_r(p)^{1-\alpha}
    \le\rho_\ell e^{-h(\alpha)}.
\end{equation}
For the second assertion, let $n,s$ satisfy the stated integer restrictions, set $\alpha=s/n$, and fix any $t\ge0$. Lemma~\ref{lem:method of types} and~\eqref{eq:central-zeta-gap} imply
\begin{align*}
\binom ns\bigl(\zeta_\ell(\alpha)+t\bigr)^n
&{}\le e^{nh(\alpha)}
    \bigl(\rho_\ell e^{-h(\alpha)}+t\bigr)^n\\
&{}=\bigl(\rho_\ell+t e^{h(\alpha)}\bigr)^n\\
&{}\le(\rho_\ell+2t)^n.
\end{align*}
The final inequality follows from $h(\alpha)\le\ln2$, the maximum attained at $\alpha=1/2$.
\end{proof}
\section*{Acknowledgments}
The authors are grateful to the MIT PRIMES program for the opportunity to conduct this fascinating research.
AI was used in this paper to help with coding simulations, working through example cases, and proofreading.

\bibliographystyle{alpha}
\bibliography{bib}

\end{document}